\DeclareMathOperator*{\esssup}{ess\,sup}
\newcommand{\range}[1]{\llbracket #1 \rrbracket}
\newtheorem{theorem}{Theorem}
\newtheorem{definition}{Definition}
\newtheorem{proposition}{Proposition}
\begin{document}

%

\title{Noiseless Privacy } 

\author{
\IEEEauthorblockN{
Farhad~Farokhi}
\IEEEauthorblockA{
CSIRO's Data61 \\ 
The University of Melbourne \\
farhad.farokhi@data61.csiro.au \\
farhad.farokhi@unimelb.edu.au
}
}

\maketitle

\begin{abstract}
In this paper, we define noiseless privacy, as a non-stochastic rival to differential privacy, requiring that the outputs of a mechanism (i.e., function composition of a privacy-preserving mapping and a query) can attain only a few values while varying the data of an individual (the logarithm of the number of the distinct values is bounded by the privacy budget). Therefore, the output of the mechanism is not fully informative of the data of the individuals in the dataset. We prove several guarantees for noiselessly-private mechanisms. The information content of the output about the data of an individual, even if an adversary knows all the other entries of the private dataset, is bounded by the privacy budget. The zero-error capacity of memory-less channels using noiselessly private mechanisms for transmission is upper bounded by the privacy budget. The performance of a non-stochastic hypothesis-testing adversary is bounded again by the privacy budget. Finally, assuming that an adversary has access to a stochastic prior on the dataset, we prove that the estimation error of the adversary for individual entries of the dataset is lower bounded by a decreasing function of the privacy budget. In this case, we also show that the maximal information leakage is bounded by the privacy budget. In addition to privacy guarantees, we prove that noiselessly-private mechanisms admit composition theorem and  post-processing does not weaken their privacy guarantees. We prove that quantization operators can ensure noiseless privacy if the number of quantization levels is appropriately selected based on the sensitivity of the query and the privacy budget. Finally, we illustrate the privacy merits of noiseless privacy using multiple datasets in energy and transport.
\end{abstract}

\begin{IEEEkeywords}
data privacy; noiseless privacy; non-stochastic information theory; hypothesis testing.
\end{IEEEkeywords}

\section{Introduction}
Big data revolution, equipped with novel tools for data collection, analysis, and reporting, has significant promises for answering societal challenges. These promises however come at the cost of erosion of privacy.  Therefore, there is a need for rigorous protection of the privacy of individuals.

Natural candidates for privacy protection, such as differential privacy~\cite{dwork2006calibrating,dwork2014algorithmic} and information-theoretic privacy~\cite{sankar2013utility,yamamoto1983source}, require randomized policies for privacy protection.
The definition of differential privacy assumes the use of randomized functions as well as the probability of outputs, and conventional information-theoretic tools, such as mutual information and entropy, rely on random variables. 

Heuristic-based privacy-preserving methods, such as $k$-anonymity~\cite{samarati2001protecting, sweeney2002k} and $\ell$-diversity~\cite{1617392}, are however deterministic in nature. They employ deterministic mechanisms, such as suppression and generalization, and do not assume stochastic properties about the datasets. Popularity of these methods is evident from the availability of toolboxes for implementation\footnote{\url{https://arx.deidentifier.org/overview/related-software/}}.

Although providing powerful guarantees, randomized or stochastic privacy-preserving policies sometimes cause problems, such as un-truthfulness~\cite{bild2018safepub}, that are undesirable in practice~\cite{Poulis2015}. This is touted as a reason behind slow adoption of differential privacy within financial and health sectors~\cite{bild2018safepub}. For instance, randomized policies, stemming from differential privacy in financial auditing, complicate fraud detection~\cite{bhaskar2011noiseless, nabar2006towards}. Randomized policies can also generate unreasonable and unrealistic outputs that might mislead investors or market operators, e.g., by reporting noisy outputs that point to lack of liquidity in a financial sector while that was not the case. For instance, the slow-decaying nature of the Laplace noise means impossible reports (e.g., negative median income) can occur with a non-negative probability~\cite{bambauer2013fool}. Randomized privacy-preserving policies have also encountered difficulties in medical, health, or social sciences~\cite{dankar2013practicing,Mervis114}. Furthermore, the Laplace mechanism, a common approach to ensuring differential privacy, is shown to cause undesirable properties, e.g., the optimal estimation in the presence of Laplace noise is computationally expensive~\cite{farokhi2016optimal}. 
These motivate the development of non-stochastic privacy metrics and privacy-preserving policies in a rigorous manner. 

Although it has been proved that noiseless policies cannot provide the strong guarantees of randomized policies, e.g., it has been proved that differential privacy cannot be delivered without noise~\cite{dwork2014algorithmic}, the popularity of noiseless privacy-preserving policies justifies investigating metrics for analysis and comparison. This must be done irrespective of their inherent philosophical weaknesses in comparison to stochastic policies because they belong to a different category. 

In this paper, we define the new notion of noiseless privacy. Noiseless privacy implies that the outputs of a mechanism can only attain a few distinct values while varying the data of an individual. Therefore, the output of the mechanism is not very informative about the data of the individuals in a dataset. We prove the following guarantees for the noiselessly-private mechanisms:
\begin{itemize}
\item The information content of the output about the data of each individual, even if an adversary knows all the other entries of the private dataset, is bounded from above by the privacy budget (a constant similar to the privacy budget in differential privacy capturing the amount of the leaked information). As non-stochastic notions of information, we use non-stochastic information leakage in~\cite{8662687} and the maximin information~\cite{nair2013nonstochastic}. These are established measures of information in the non-stochastic information theory literature~\cite{duan2014transfer,nair2013nonstochastic,8662687, lim2014deterministic}. 
\item Zero-error capacity of memory-less channels using noiselessly-private mechanisms for data transmission is upper bounded by the privacy budget. Zero-error capacity is the non-stochastic  equivalent of normal capacity, also coined by Shannon while investigating non-stochastic communication channels and worst-case behaviours~\cite{shannon1956zero}. 
\item The performance of an adversary performing non-stochastic hypothesis tests~\cite{farokhiCDC2019} on the data of an individual, while knowing all the other entries of the private dataset, is bounded again by the privacy budget. 
\item Assuming that an adversary has access to a stochastic prior about the dataset, we prove that the error of an adversary for estimating the data each individual is lower bounded by a decreasing function of the privacy budget. Therefore, by reducing the privacy budget, the estimation error of the adversary worsens. In this case, we also show that the maximal information leakage (in the sense of~\cite{issa2018operational}) is upper bounded by the privacy budget. Hence, by reducing the privacy budget, we can also reduce the maximal information leakage.
\end{itemize}
In addition to these privacy guarantees, we prove the following important properties:
\begin{itemize}
\item Noiselessly-private mechanisms admit composition theorem, i.e., the privacy budgets of the mechanisms add up when reporting on multiple queries on the same private dataset.
\item Post-processing of noiselessly-private mechanisms does not weaken their privacy guarantees, i.e., the privacy budget can only be increased by post-processing. 
\end{itemize}
We also prove that quantization operators can ensure noiseless privacy. We provide a recipe for determining the number of quantization levels based on the sensitivity of the query and the privacy budget. Finally, we illustrate the privacy merits of noiseless privacy using multiple datasets in energy and transport.

\subsection{Related Studies}
\paragraph*{Anonymization} Anonymization is widely used within public and private sectors for releasing sensitive datasets\footnote{See \url{https://data.gov.au} and \url{https://www.kaggle.com} for examples.} for public competitions and analysis. Although popularly adopted, anonymization is often insufficient for privacy preservation~\cite{narayanan2008robust, su2017anonymizing, de2013unique} and hence, systematic methods with provable guarantees are required.

\paragraph*{Multi-Party Computation and Encryption}
We may use secure multi-party computation, for instance based on homomorphic encryption, to compute aggregate statistics or machine learning models~\cite{popa2011privacy, li2010secure, Lindell1010073540445986_3,du2004privacy,vaidya2002privacy,vaidya2008privacy,jagannathan2005privacy}. Secure multi-party computation and homomorphic encryption introduce massive computation and communication overheads. They also do not fully eliminate the risk of privacy breaches, e.g., risks associated with dis-aggregation attacks still remain if these algorithms are not pared with other privacy-preserving techniques.

\paragraph*{Differential Privacy} Differential privacy offers provable privacy guarantees~\cite{dwork2008differential, dwork2014algorithmic, duchi2013local,kairouz2014extremal, machanavajjhala2008privacy, hall2012random, padakandla2018preserving}. This method uses randomization to provide plausible deniability for the data of an individual by ensuring that the statistics of privacy-preserving outputs do not change significantly by varying the data of individual. Additive Laplace and Gaussian noise with scales proportional to the sensitivity of the submitted query with respect to the individual entries of the dataset  are proved to guarantee differential privacy~\cite{dwork2014algorithmic}. By definition, differential privacy requires randomization.

\paragraph*{Information-Theoretic Privacy} Information-theoretic privacy, a rival to differential privacy, dates back to studying secrecy~\cite{6772207} and its generalizations~\cite{sankar2013utility,courtade2012information,yamamoto1983source, yamamoto1988rate}.  Information-theoretic guarantees have been also used to measure the quantity of leaked private information when using differential privacy~\cite{Aceto2011,du2012privacy}. In information-theoretic privacy, entropy, mutual information, Kulback-Leiber divergence, and Fisher information have been  repeatedly used as measures of privacy~\cite{farokhi2015quadratic,wainwright2012privacy, liang2009information, lai2011privacy,li2015privacy, bassi2018lossy, farokhi2018fisher}. Information theory, starting with Shannon~\cite{shannon1948mathematical}, assumes that data source and communication channels are random, and is powerful in modelling and analysing communication systems. However, traditional notions in information theory, such as mutual information, are not useful for analysing non-stochastic/noiseless settings and deterministic privacy-preserving policies.

\paragraph*{Deterministic Privacy-Preserving Policies}
Noiseless privacy-preserving policies are often heuristic-based making them vulnerable to attacks, e.g., $k$-anonymity is vulnerable to homogeneity attack~\cite{1617392}. This is because we do not possess sensible measures/definitions for privacy that extend to noiseless privacy-preserving policies on deterministic datasets. Therefore, we cannot prove, in any sense, privacy guarantees of noiseless privacy-preserving (even if weak or limited in scope or practice). The  popularity of noiseless privacy-preserving policies justifies investigating metrics for analysis and comparison. In this paper, we propose a rival to differential privacy that is noiseless. We use non-stochastic information theory, non-stochastic hypothesis testing, stochastic estimation theory to investigate the merits of this definition. Non-stochastic information theory dates back to early studies of information transmission~\cite{hartley1928transmission, kolmogorov1959varepsilon,renyi1961measures, nair2013nonstochastic,jagerman1969varepsilon}. It has been recently used in engineering~\cite{nair2012nonstochastic, duan2015transfer,wiese2016uncertain}. Most recently, non-stochastic information theory was used in~\cite{8662687} for investigating deterministic privacy-preserving policies. Interestingly, in~\cite{8662687}, it was easily proved that $k$-anonymity is not privacy-preserving using non-stochastic information theory, a fact that was only observed using adversarial attacks in~\cite{1617392}.  

\subsection{Paper Organization}
The rest of the paper is organized as follows. Background material on non-stochastic information theory and hypothesis testing are presented in Section~\ref{sec:background}. Noiseless privacy is defined in Section~\ref{sec:noiselessprivacy_def}. In this section, guarantees and properties of noiseless privacy are also presented. A method for ensuring noiseless privacy is presented in Section~\ref{sec:noiselessprivacy_satisfaction}. Experimental results are presented in Section~\ref{sec:experiments}. Finally, the paper is concluded in Section~\ref{sec:conclusions}.

\section{Uncertain Variables, Hypothesis Testing, and Non-Stochastic Information Theory} \label{sec:background}
We start by reviewing necessary concepts from non-stochastic information theory, particularly, uncertain variables, non-stochastic information leakage, and hypothesis testing.
\subsection{Uncertain  Variables}
Let  $\Omega$ be an uncertainty set/space whose elements, i.e.,   $\omega\in \Omega$, model/capture the source of uncertainty. An uncertain variable $X$ is defined as a mapping on $\Omega$. For any uncertain variable $X:\Omega\rightarrow\mathbb{X}$, $X(\omega)$ is the realization of uncertain variable $X$ (corresponding to the realization of uncertainty $\omega\in \Omega$). \textit{Marginal range} of uncertain variable $X$ is
$\range{X}:=\{X(\omega):\omega\in\Omega \}\subseteq
\mathbb{X}.$
\textit{Joint range} of uncertain variables $X:\Omega\rightarrow\mathbb{X}$
and $Y:\Omega\rightarrow\mathbb{Y}$ is defined as $
\range{X,Y}:=\{(X(\omega), Y(\omega)):
\omega\in\Omega \}\subseteq \mathbb{X}\times \mathbb{Y}.$
\textit{Conditional range} of uncertain variable $X$, conditioned on the realizations of uncertain variable $Y$ belonging to the set $\mathcal{Y}$, i.e., $Y(\omega)\in\mathcal{Y}\subseteq\range{Y}$, is given by 
$\range{X|\mathcal{Y}}:=\{X(\omega):\exists \omega\in\Omega  \mbox{ such that } Y(\omega)\in\mathcal{Y}\} \subseteq \range{X}.$
If $\mathcal{Y}$ is a singleton, i.e., $\mathcal{Y}=\{y\}$, we use $\range{X|y}$ instead of $\range{X|\{y\}}=\range{X|\mathcal{Y}}$. The definition of uncertain variables and their properties are similar to those of random variables with the exception of not requiring a measure on  $\Omega$. Finally, if the marginal range $\range{X}$ is uncountably infinite for an uncertain variable $X$, we refer to $X$ as a \textit{continuous} uncertain variable, similar to a continuous random variable. If the marginal range $\range{X}$ is countable for an uncertain variable $X$, we call $X$ a \textit{discrete} uncertain variable. 

\subsection{Non-Stochastic Information Theory}
Non-stochastic entropy of discrete uncertain variable $X$ is
\begin{align} \label{eqn:firstentropys}
H_0(X):=\log_2(|\range{X}|)\in \mathbb{R}\cup\{\pm \infty\}.
\end{align}
This is commonly referred to as the Hartley entropy~\cite{hartley1928transmission, nair2013nonstochastic}. For continuous uncertain variable $X$, the non-stochastic (differential) entropy is given by
\begin{align} \label{eqn:firstentropys1}
h_0(X):=\log_e(\mu(|\range{X}|))\in \mathbb{R}\cup\{\pm \infty\},
\end{align}
where $\mu(\cdot)$ is the Lebesgue measure. This is sometimes referred to as R\'{e}nyi differential 0-entropy~\cite{nair2013nonstochastic}.  The authors of~\cite{nair2013nonstochastic,shingin2012disturbance} define  the non-stochastic conditional entropy of uncertain variable $X$, conditioned on  uncertain variable $Y$, as
\begin{align}
H_0(X|Y):=\max_{y\in \range{Y} } \log_2(|\range{X|y}|),
\end{align}
for discrete uncertain variables $X$ and $Y$. Similarly, for continuous uncertain variables $X$ and $Y$, we get
\begin{align}
h_0(X|Y):=\esssup_{y\in \range{Y} } \log_e(\mu(\range{X|y})).
\end{align}
Now, we can define non-stochastic information between uncertain variables $X$ and $Y$ as the difference of the entropy of $X$ with and without access to realizations of $Y$. Hence, for discrete uncertain variables, non-stochastic information can be defined as
\begin{align}
I_0(X;Y):=&H_0(X)-H_0(X|Y)\nonumber\\
=&\min_{y\in \range{Y} } \log_2\left(\frac{|\range{X}|}{|\range{X|y}|} \right).\label{eqn:information1}
\end{align} 
For continuous uncertain variables, non-stochastic information can be similarly defined as $I_0(X;Y):=h_0(X)-h_0(X|Y)$. It is clear that the non-stochastic information is in fact not symmetric, i.e., $I_0(X;Y)\neq I_0(Y;X)$ in general. 

With slight adaptation, Kolmogorov had previously defined `combinatorial' conditional entropy using $\log(|\range{X|y}|)$ and the information gain  as $|\range{X}|/|\range{X|y}|$ in~\cite{kolmogorov1959varepsilon}. The combinatorial conditional entropy and the information gain are only defined for a fixed realization $Y(\omega) = y$ while~\eqref{eqn:information1} is based on the worst-case scenario. 

In~\cite{8662687}, it was observed that,  in the context of information-theoretic privacy, the non-stochastic information~\eqref{eqn:information1} is not a good measure of information leakage and therefore, the  non-stochastic information leakage was proposed as
\begin{align}
L_0(X;Y):=&\max_{y\in \range{Y} } \log_2\left(\frac{|\range{X}|}{|\range{X|y}|} \right),\label{eqn:information2}
\end{align}
for discrete uncertain variables. Similarly, for continuous uncertain variables, the non-stochastic information leakage was defined as 
\begin{align}
L_0(X;Y):=&\esssup_{y\in \range{Y} } \log_e\left(\frac{\mu(\range{X})}{\mu(\range{X|y})} \right).\label{eqn:information3}
\end{align}
In general, the non-stochastic information $I_0$ and non-stochastic information leakage $L_0$ are not equal, i.e,  $I_0(X;Y)\neq L_0(Y;X)$.
In fact, from the definition, it is easy to see that $I_0(X;Y)\leq L_0(X;Y)$. 
Further, $L_0(X;Y)$ is not symmetric. We propose the symmetrized non-stochastic information leakage as 
\begin{align}
L_0^{\mathrm{s}}(X;Y):=\min(L_0(X;Y),L_0(Y;X)).
\end{align}
Note that, by construction, $L_0^{\mathrm{s}}(X;Y)=L_0^{\mathrm{s}}(Y;X)$.

\paragraph{Maximin Information}
In~\cite{nair2013nonstochastic}, the maximin information was introduced as a symmetric measure of information and its relationship with zero-error capacity was explored. To present the definition of the maximin information, we need to introduce the notion of overlap partitions:
	\begin{itemize}
		\item $x,x'\in\range{X}$ are $\range{X|Y}$-overlap connected, or in short $x\leftrightsquigarrow x'$,  if there exists a finite sequence of conditional ranges $\{\range{X|y_i}\}_{i=1}^n$ such that $x\in\range{X|y_1}$, $x'\in\range{X|y_n}$, and $\range{X|y_i}\cap \range{X|y_{i+1}}\neq \emptyset$ for all $i=1,\dots,n-1$;
		\item $\mathcal{A}\subseteq\range{X}$ is $\range{X|Y}$-overlap connected if all $x,x'\in\mathcal{A}$ are $\range{X|Y}$-overlap connected;
		\item $\mathcal{A},\mathcal{B}\subseteq\range{X}$ are $\range{X|Y}$-overlap isolated if there does not exist $x\in\mathcal{A},x'\in\mathcal{B}$ such that $x,x'$ are $\range{X|Y}$-overlap connected;
		\item An $\range{X|Y}$-overlap partition is a partition of $\range{X}$ such that  each member set is $\range{X|Y}$-overlap connected and all two member sets are $\range{X|Y}$-overlap isolated.
	\end{itemize}

Symmetry, i.e., $x\leftrightsquigarrow x'$ implies that $x'\leftrightsquigarrow x$, and transitivity, i.e., $x\leftrightsquigarrow x'$ and $x'\leftrightsquigarrow x''$ implies that $x\leftrightsquigarrow x''$, guarantee that a unique $\range{X|Y}$-overlap partition always exists~\cite{nair2013nonstochastic}. The unique $\range{X|Y}$-overlap partition is shown by $\range{X|Y}_\star$ in what follows. The maximin information is
\begin{align}
I_\star(X;Y):=\log_2(|\range{X|Y}_\star|).
\end{align}
In~\cite{nair2013nonstochastic}, it was proved that  $|\range{X|Y}_\star|=|\range{Y|X}_\star|$ and thus $I_\star(X;Y)=I_\star(Y;X)$. We now prove an important result regarding the relationship between non-stochastic information leakage and maximin information. 

\begin{proposition} \label{prop:1}  For discrete uncertain variable $Y$, $I_\star(X;Y)\leq L_0^{\mathrm{s}}(X;Y)$.
\end{proposition}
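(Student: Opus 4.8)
The plan is to establish the two one-sided bounds $I_\star(X;Y)\le L_0(X;Y)$ and $I_\star(X;Y)\le L_0(Y;X)$ separately, and then combine them through the definition $L_0^{\mathrm{s}}(X;Y)=\min(L_0(X;Y),L_0(Y;X))$. The symmetry property $I_\star(X;Y)=I_\star(Y;X)$, equivalently $|\range{X|Y}_\star|=|\range{Y|X}_\star|$, from~\cite{nair2013nonstochastic} means the second bound follows from the first by swapping the roles of $X$ and $Y$, so the real work reduces to a single counting argument.

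The key structural fact I would prove first is that every conditional range $\range{X|y}$, with $y\in\range{Y}$, lies entirely inside a single member of the overlap partition $\range{X|Y}_\star$. This is immediate from the definition of overlap-connectedness: any two points $x,x'\in\range{X|y}$ are $\range{X|Y}$-overlap connected through the trivial length-one sequence $\{\range{X|y}\}$, hence they belong to the same overlap-connected component. Consequently, taking any member $C$ of the partition and any $x\in C$, if $\omega$ realizes $x=X(\omega)$ and $y=Y(\omega)$, then $x\in\range{X|y}\subseteq C$, so $|C|\ge |\range{X|y}|\ge \min_{y'\in\range{Y}}|\range{X|y'}|$.

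With this containment in hand the counting step is routine: the members of $\range{X|Y}_\star$ are pairwise disjoint and cover $\range{X}$, so $|\range{X}|=\sum_{C\in\range{X|Y}_\star}|C|\ge |\range{X|Y}_\star|\cdot \min_{y\in\range{Y}}|\range{X|y}|$. Rearranging gives $|\range{X|Y}_\star|\le |\range{X}|/\min_{y}|\range{X|y}|=\max_{y}(|\range{X}|/|\range{X|y}|)$, and taking $\log_2$ yields exactly $I_\star(X;Y)\le L_0(X;Y)$. Applying the same argument to $\range{Y|X}_\star$ gives $I_\star(Y;X)\le L_0(Y;X)$, and since $I_\star(X;Y)=I_\star(Y;X)$ both bounds hold for $I_\star(X;Y)$; the claim $I_\star(X;Y)\le L_0^{\mathrm{s}}(X;Y)$ then follows by taking the minimum.

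I expect the main obstacle to be the containment lemma rather than the arithmetic, specifically being careful that each partition member is genuinely at least as large as the smallest conditional range, which relies on the fact that every element of $\range{X}$ is covered by some conditional range (guaranteed because $x=X(\omega)$ forces $x\in\range{X|Y(\omega)}$). The discreteness of $Y$ is what ensures $L_0(Y;X)$ is given by the counting expression~\eqref{eqn:information2}; were $X$ continuous one would instead invoke the measure-based form~\eqref{eqn:information3}, replacing cardinalities by Lebesgue measures throughout the counting step, but the structure of the argument would be unchanged.
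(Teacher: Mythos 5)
Your proof is correct and follows essentially the same route as the paper's: the containment of each conditional range $\range{X|y}$ in a single partition member, the disjoint-cover counting bound $|\range{X}|\ge|\range{X|Y}_\star|\cdot\min_y|\range{X|y}|$, and the symmetry $I_\star(X;Y)=I_\star(Y;X)$ to obtain the bound against $L_0(Y;X)$. The only cosmetic difference is that you prove the containment lemma directly from overlap-connectedness of points within $\range{X|y}$, whereas the paper phrases the same fact as a contradiction with overlap isolation of distinct partition members.
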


\begin{proof} See Appendix~\ref{proof:prop:1}.
\end{proof}

An uncertain time series $X$ is a sequence of uncertain variables $X[k]:\Omega\rightarrow\mathbb{X}$ for all $k\in\mathbb{N}$. Alternatively, we can think of uncertain time series $X$ as a mapping from the sample space $\Omega$ to the set of discrete-time functions $\mathbb{X}^\infty:=\{\forall x:\mathbb{N}\rightarrow \mathbb{X}\}$. 

Now, we can define a memory-less uncertain communication channel. A memory-less uncertain channel maps any uncertain time series $X$ to uncertain time series $Y$ such that 
\begin{align*}
\range{&Y[k],\dots,Y[1]|X[k](\omega)=x[k],\dots,X[1](\omega)=x[1]}\\
&=\range{Y[k]|X[k](\omega)=x[k]}\times \cdots \times \range{Y[1]|X[1](\omega)=x[1]},
\end{align*}
for all $(x[k],\dots,x[1])\in\range{X[k],\dots,X[1]}$ and all $k\in\mathbb{N}$. 
A code of length $k$ is a finite set $\mathcal{F}\subseteq \mathbb{X}^k$ with each codeword $f\in\mathcal{F}$ denoting a distinct message. Define
\begin{align*}
\mathcal{X}&(y[k],\dots,y[1])\\
&:=\range{X[k],\dots,X[1]|Y[k](\omega)=y[k],\dots,Y[1](\omega)=y[1]}.
\end{align*}
The zero-error capacity is 
\begin{align}
C_0:=\lim_{k\rightarrow\infty}\sup_{
	\scriptsize
\begin{array}{c}
\mathcal{F}\subseteq \mathbb{X}^k: \\
|\mathcal{F}\cap \mathcal{X}(y[k],\dots,y[1])|\leq 1,\\
\forall (y[k],\dots,y[1])\in\mathbb{Y}^k
\end{array}
}
\frac{\log_2(|\mathcal{F}|)}{k}.
\end{align}
In what follows, we only consider sequence of discrete uncertain variables $Y[k]$.  Now, we are ready to relate the symmetrized non-stochastic information leakage to zero-error capacity of memory-less uncertain channels.

\begin{proposition} \label{prop:zeroerrorcapacity}
Any memory-less uncertain channel satisfies $
C_0\leq \sup_{\range{X[k]}\subseteq\mathbb{X}} L_0^{\mathrm{s}}(X[k];Y[k]).$
\end{proposition}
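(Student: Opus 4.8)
The plan is to pass through the maximin information via Proposition~\ref{prop:1} and then single-letterize. Throughout write $X^k,Y^k$ for the length-$k$ blocks $(X[1],\dots,X[k])$ and $(Y[1],\dots,Y[k])$, and $y^k$ for a realization of $Y^k$. First I would unpack the combinatorics of the code constraint: the admissibility condition $|\mathcal{F}\cap\mathcal{X}(y^k)|\le 1$ says exactly that the conditional output ranges $\range{Y^k|f}$, $f\in\mathcal{F}$, are pairwise disjoint. Fixing the input marginal range to be an optimal code $\mathcal{F}$, every realizable $y^k$ is then consistent with at most one codeword, so each $\range{X^k|y^k}$ is a singleton (or empty) and distinct codewords never share a conditional range; hence the $\range{X^k|Y^k}$-overlap partition consists of the singletons $\{f\}$, $f\in\mathcal{F}$, and $I_\star(X^k;Y^k)=\log_2|\mathcal{F}|$ on this range. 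Since conversely the number of overlap classes never exceeds the maximum code size, this yields the identity $C_0=\lim_{k\to\infty}\tfrac{1}{k}\sup_{\range{X^k}}I_\star(X^k;Y^k)$, which is Nair's characterization of zero-error capacity by maximin information~\cite{nair2013nonstochastic}.

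Next I would insert Proposition~\ref{prop:1}. Because $Y^k$ is discrete, $I_\star(X^k;Y^k)\le L_0^{\mathrm{s}}(X^k;Y^k)$ for every admissible input range, so the displayed identity gives $C_0\le\lim_{k\to\infty}\tfrac{1}{k}\sup_{\range{X^k}}L_0^{\mathrm{s}}(X^k;Y^k)$. It remains to single-letterize the right-hand side, i.e.\ to prove $\sup_{\range{X^k}}L_0^{\mathrm{s}}(X^k;Y^k)\le k\,\sup_{\range{X}}L_0^{\mathrm{s}}(X;Y)$ for each $k$; dividing by $k$, letting $k\to\infty$, and invoking stationarity (so the time index in $X[k],Y[k]$ is immaterial) would then deliver the claim.

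The single-letter reduction is the crux, and the memoryless hypothesis is what makes it tractable. Through $\range{Y^k|x^k}=\prod_{i=1}^k\range{Y[i]|x[i]}$ and $\range{Y^k}\subseteq\prod_{i=1}^k\range{Y[i]}$, one half of $L_0^{\mathrm{s}}$ tensorizes cleanly: the output-conditional leakage obeys the subadditive bound $L_0(Y^k;X^k)\le\sum_{i=1}^k L_0(Y[i];X[i])\le k\,\sup_{\range{X}}L_0(Y;X)$, whose concrete engine is the packing inequality $|\mathcal{F}|\,\min_{f}|\range{Y^k|f}|\le|\range{Y^k}|$ coming from disjointness of the output ranges. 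Since $L_0^{\mathrm{s}}\le L_0(Y^k;X^k)$, this already produces the weaker estimate $C_0\le\sup_{\range{X}}L_0(Y;X)$. I expect the real difficulty to be sharpening $L_0(Y;X)$ to the symmetrized $L_0^{\mathrm{s}}=\min(L_0(X^k;Y^k),L_0(Y^k;X^k))$: the input-conditional half does not admit the same term-by-term subadditive splitting, so the minimum cannot be reduced coordinatewise. I would handle this by returning to the optimal code, where the minimum is pinned down---there $L_0(X^k;Y^k)=\log_2|\mathcal{F}|\le L_0(Y^k;X^k)$, so $L_0^{\mathrm{s}}(X^k;Y^k)=\log_2|\mathcal{F}|$---and then bounding $\log_2|\mathcal{F}|$ by the single-letter symmetrized leakage using the subadditive output side together with the structural coupling between the two per-letter leakages forced by a common channel (e.g.\ vanishing output-conditional leakage forces vanishing input-conditional leakage). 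Making this coupling quantitative, so that the minimum does not inflate under tensorization, is the step I would treat most carefully.
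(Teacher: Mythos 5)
Your skeleton coincides with the paper's: invoke Nair's Theorem~4.1 to express $C_0$ through maximin information, apply Proposition~\ref{prop:1} (valid since the $Y[k]$ are discrete) to get $I_\star\leq L_0^{\mathrm{s}}$, and then single-letterize the leakage using memorylessness. The problem is that the step you yourself flag as the crux and leave open---bounding $L_0^{\mathrm{s}}(X^k;Y^k)=\min\bigl(L_0(X^k;Y^k),L_0(Y^k;X^k)\bigr)$ by $k\sup L_0^{\mathrm{s}}(X;Y)$---is a genuine gap, and your diagnosis of why it is hard is mistaken. You assert that the input-conditional leakage $L_0(X^k;Y^k)$ ``does not admit the same term-by-term subadditive splitting.'' In the setting relevant to the proposition (per-letter input ranges, i.e.\ inputs unrelated across time, which is what the supremum $\sup_{\range{X[k]}\subseteq\mathbb{X}}$ ranges over and what the paper's proof uses), it tensorizes \emph{exactly}, in both directions. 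Memorylessness gives $\range{X[k],\dots,X[1]\,|\,y[k],\dots,y[1]}=\prod_{\ell}\range{X[\ell]\,|\,y[\ell]}$ together with $|\range{X[k],\dots,X[1]}|=\prod_\ell|\range{X[\ell]}|$ and $\range{Y[k],\dots,Y[1]}=\prod_\ell\range{Y[\ell]}$, so the maximum over $y^k$ in the definition of $L_0$ splits coordinatewise:
\begin{align*}
L_0(X[k],\dots,X[1];Y[k],\dots,Y[1])&=\sum_{\ell=1}^k L_0(X[\ell];Y[\ell])=k\,L_0(X[\ell];Y[\ell]),\\
L_0(Y[k],\dots,Y[1];X[k],\dots,X[1])&=\sum_{\ell=1}^k L_0(Y[\ell];X[\ell])=k\,L_0(Y[\ell];X[\ell]).
\end{align*}
Since both directed leakages are exactly additive, the minimum passes through with no loss, $L_0^{\mathrm{s}}(X^k;Y^k)=k\,L_0^{\mathrm{s}}(X[\ell];Y[\ell])$, and the proposition follows by combining this with Proposition~\ref{prop:1} and Theorem~4.1 of Nair, which is precisely the paper's proof.

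Consequently your fallback---pinning the minimum at an optimal code, where $L_0(X^k;Y^k)=\log_2|\mathcal{F}|$, and then invoking an unquantified ``structural coupling'' so that vanishing output-conditional leakage forces vanishing input-conditional leakage---is both unnecessary and, as written, not a proof: the coupling is never stated as an inequality, and your observation that per-letter decomposition fails for a non-product codebook $\mathcal{F}$ is exactly the difficulty that cannot be patched this way; relating arbitrary codes to product-range inputs is the content of Nair's Theorem~4.1, not something you can re-derive inside the leakage bound by hand-waving. The missing idea is not a new coupling lemma but the elementary observation that, under memorylessness with unrelated per-letter inputs, conditioning factorizes ($\range{X^k|y^k}=\prod_\ell\range{X[\ell]|y[\ell]}$, $\range{Y^k}=\prod_\ell\range{Y[\ell]}$), which collapses your ``crux'' to a one-line computation.
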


\begin{proof} See Appendix~\ref{proof:prop:zeroerrorcapacity}.
\end{proof}

\subsection{Non-Stochastic Hypothesis Testing}

\begin{figure}[t]
\centering
\begin{tikzpicture}
 \draw[fill=black!15]  plot[rotate=90,smooth,tension=.7,scale=0.3] coordinates {(-3.5,0.5) (-3,2.5) (-1,3.5) (1.5,3) (4,3.5) (5,2.5) (5,0.5) (2.5,-2) (0,-0.5) (-3,-2) (-3.5,0.5)};
\draw [fill=black!15,scale=.4,xshift=5cm] plot [rotate=50,smooth cycle] %
    coordinates {(-1.14,-1)(-0.84, -.18) (-0.04, 0.3) (2.24, 0) %
    (4.48, -0.56) (4.48, -3.46) (2.38,-4.84)(0.38, -1.28)};
\draw[-] (3.4,1.2)-- (3.06,-0.33);
\node[draw,rectangle,minimum width=1cm,minimum height=2cm,fill=black!15] at (6,0.2) {};
\draw[-] (5.5,0.2)-- (6.5,0.2);
\node[] at (6,0.7) {$p_0$};
\node[] at (6,-0.3) {$p_1$};
\draw[dashed,-] (-0.37,1.58) -- (3.4,1.22);
\draw[dashed,-] (+0.43,-1.02) -- (2.06,-0.6);
\path[dash dot,->,>=stealth'] (2.2,-.1) edge[bend right] (5.8,-.4);
\path[dash dot,->,>=stealth'] (3.6,.6) edge[bend left] (5.8,.8);
\node[] at (3.7,-0.1) {$\range{X|p_0}$};
\node[rotate=45] at (2.6,0.3) {$\range{X|p_1}$};
\node[] at (-.3,0.4) {$\Omega$};
\node[] at (3.8,1.4) {$\range{X}$};
\node[] at (6.0,1.5) {$\range{H}$};
\draw[pattern=north west lines]  plot[rotate=25,smooth,tension=.9,scale=0.25,yshift=-14.cm,xshift=10cm] coordinates {(-3.5,0.5) (-3,2.5) (-1,3.5) (1.5,3) (3,3.5) (4,2.5) (4,0.5) (2.5,-1) (1,-0.5) (-3,-1) (-3.5,0.5)};
\node[circle,draw,minimum width=1.5cm,pattern=north east lines] at (2.8,-2) {};
\path[dash dot,->,>=stealth'] (2.8,.1) edge[bend right] (2.6,-1.3);
\path[dash dot,->,>=stealth'] (3.8,-.4) edge[bend left] (3.6,-1.3);
\node[] at (5.0,-2.7) {\scriptsize $\range{Y|\range{X|p_0}}\setminus\range{Y|\range{X|p_1}}$};
\node[] at (0.3,-2) {\scriptsize $\range{Y|\range{X|p_1}}\setminus\range{Y|\range{X|p_0}}$};
\node[] at (2.2,-3.2) {\scriptsize $\range{Y|\range{X|p_1}}\cap \range{Y|\range{X|p_0}}$};
\path[<->,>=stealth'] (2.8,-3.0) edge[bend right] (3.2,-2.2);
\path[<->,>=stealth'] (4.2,-1.7) edge[bend left] (5.2,-2.5);
\path[<->,>=stealth'] (2.4,-2.3) edge[bend left] (1.2,-2.15);
\node[] at (4.9,-1.2) {$\range{Y}$};
\end{tikzpicture}
\caption{\label{fig:uncertainvariable} Relationship between uncertain variables in non-stochastic hypothesis testing based on uncertain measurements. If the realization of uncertain measurement $Y$ belongs to $\range{Y|p_0}\cap\range{Y|p_1}$, there is not enough evidence to accept or reject the null hypothesis $p_0$ or the alternative hypothesis $p_1$. However, if the realization of uncertain measurement  $Y$ belongs to $(\range{Y|p_0}\setminus\range{Y|p_1})$ ($(\range{Y|p_1}\setminus\range{Y|h_2})$), we can confidently accept (reject) the null hypothesis $p_0$ and reject (accept) the alternative hypothesis $p_1$.}
\end{figure}
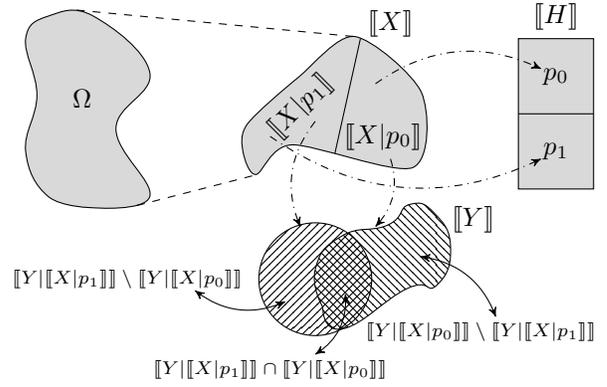

Consider uncertain variable $X$ denoting the original uncertain variable. An adversary is interested in testing the validity of a hypothesis for the realizations of $X$. The adversary does not have access to realizations of this uncertain variable as otherwise hypothesis testing is trivial. Instead, it has access to an \textit{uncertain measurement} of this variable denoted by $Y$. This is captured by that $Y(\omega)=g_Y(X(\omega))$ for a mapping $g_Y:\range{X}\rightarrow \range{Y}$. Recalling that uncertain variables are mappings from the uncertainty set, it must be that $Y=g_Y\circ X$, where $\circ$ denotes composition of mappings. Similarly, we may define the \textit{hypothesis} as an uncertain variable $H$ with binary range $\range{H}=\{p_0,p_1\}$, where $p_0$ denotes the \textit{null hypothesis} and $p_1$ denotes the \textit{alternative hypothesis}. We assume that there exists a mapping $g_H:\range{X}\rightarrow\range{H}$ such that  $H=g_H\circ X$; the hypothesis is constructed based on the uncertain variable $X$ as $H(\omega)=g_H(X(\omega))$. This setup and the relationship between all uncertain variables is summarized in Figure~\ref{fig:uncertainvariable}.

A \textit{test} is a function $T:\range{Y}\rightarrow\range{H}=\{p_0,p_1\}$. If $T(Y)=p_1$, the test rejects the null hypothesis in favour of the alternative hypothesis; however, if $T(Y)=p_0$, the test accepts the null hypothesis (and rejects the alternative hypothesis). The set of all tests is given by $\mathcal{C}(\range{H},\range{Y})$ which captures the set of all functions from $\range{Y}$ to $\range{H}$. Following~\cite{farokhiCDC2019}, we say that a test $T\in \mathcal{C}(\range{H},\range{Y})$ is correct at a particular realization of uncertain variable $Y$, $Y(\omega)=y\in\range{Y}$, if $\range{H|\range{X|y}}=\{T(y)\}$.  The set of all outputs at which test $T$ is correct is  equal to  $\aleph(T):=\{y\in\range{Y}:\range{H|\range{X|y}}=\{T(y)\}\}$. Based on this definition of correctness, we can define a performance measure for tests~\cite{farokhiCDC2019}. If $Y$ is a continuous uncertain variable, the performance is 
\begin{align} \label{eqn:performancetest_continuous}
\mathcal{P}(T):=\log_e(\mu(\aleph(T))).
\end{align}
Similarly, if $Y$ is a discrete uncertain variable, the performance is equal to
\begin{align}\label{eqn:performancetest_discrete}
\mathcal{P}(T):=\log_2(|\aleph(T)|).
\end{align}
In the following result, $\Delta$ denotes the symmetric difference operator on the sets, i.e., $\mathcal{A}\Delta\mathcal{B}=(\mathcal{A}\setminus\mathcal{B})\cup (\mathcal{B}\setminus\mathcal{A})$. 

\begin{proposition}[\hspace{-.0001in}\cite{farokhiCDC2019}] \label{tho:bound}  The performance of any test $T\in \mathcal{C}(\range{H},\range{Y})$ is  bounded by $ \mathcal{P}(T)\leq\log_e(\mu(\range{Y|p_0}\Delta\range{Y|p_1}))$
if $Y$ is a continuous uncertain variable, and by $\mathcal{P}(T)\leq
\log_2(|\range{Y|p_0}\Delta\range{Y|p_1}|)$ if $Y$ is a discrete uncertain variable.
\end{proposition}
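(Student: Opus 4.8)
The plan is to reduce both claimed inequalities to a single set inclusion, namely $\aleph(T)\subseteq\range{Y|p_0}\Delta\range{Y|p_1}$, and then to finish by monotonicity. Once that inclusion is in hand, in the continuous case $\mu(\aleph(T))\leq\mu(\range{Y|p_0}\Delta\range{Y|p_1})$ together with the monotonicity of $\log_e$ yields the bound on $\mathcal{P}(T)$ directly from~\eqref{eqn:performancetest_continuous}, and the discrete case follows identically from~\eqref{eqn:performancetest_discrete} via $|\aleph(T)|\leq|\range{Y|p_0}\Delta\range{Y|p_1}|$. Thus essentially all of the content sits in establishing the inclusion, and the remaining steps are routine.

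First I would obtain a clean description of the conditional range $\range{H|\range{X|y}}$ that appears in the definition of $\aleph(T)$. The structural fact to exploit is that both $Y=g_Y\circ X$ and $H=g_H\circ X$ are deterministic functions of $X$, so the value $X(\omega)$ pins down both $Y(\omega)$ and $H(\omega)$. Unwinding the definition of the conditional range gives $\range{X|y}=\{x\in\range{X}:g_Y(x)=y\}$, and hence $\range{H|\range{X|y}}=\{g_H(x):x\in\range{X},\,g_Y(x)=y\}$, which is a nonempty subset of $\range{H}=\{p_0,p_1\}$ for every $y\in\range{Y}$.

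The heart of the argument, and the step that I expect to require the most care, is to show that $p_0\in\range{H|\range{X|y}}$ is equivalent to $y\in\range{Y|p_0}$, and likewise for $p_1$. Indeed, $p_0\in\range{H|\range{X|y}}$ means there is some $x\in\range{X}$ with $g_Y(x)=y$ and $g_H(x)=p_0$, which is exactly the statement that $y$ is a realization of $Y$ compatible with $H=p_0$, i.e. $y\in\range{Y|p_0}$; the case of $p_1$ is symmetric. Since $\range{H|\range{X|y}}$ is a nonempty subset of $\{p_0,p_1\}$, it is a singleton precisely when $y$ lies in exactly one of $\range{Y|p_0}$ and $\range{Y|p_1}$, that is, when $y\in\range{Y|p_0}\Delta\range{Y|p_1}$.

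To conclude, I would note that correctness of $T$ at $y$ requires $\range{H|\range{X|y}}=\{T(y)\}$, which in particular forces this conditional range to be a singleton; therefore every $y\in\aleph(T)$ satisfies the singleton condition, giving the desired inclusion $\aleph(T)\subseteq\range{Y|p_0}\Delta\range{Y|p_1}$. The one subtlety to keep in view is that $\range{X|y}$ gathers \emph{all} $X$-values consistent with the observation $y$, not merely the value at one outcome $\omega$; but because $H$ factors through $X$, no spurious hypotheses are introduced, and the membership equivalence above remains exact, so the inclusion holds without loss.
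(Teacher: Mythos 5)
Your proof is correct. This paper does not actually contain a proof of the proposition---it is quoted from the cited reference---but your argument, reducing both bounds to the inclusion $\aleph(T)\subseteq\range{Y|p_0}\Delta\range{Y|p_1}$ via the observation that $\range{H|\range{X|y}}$ is a nonempty subset of $\{p_0,p_1\}$ which is a singleton exactly when $y$ lies in exactly one of $\range{Y|p_0}$ and $\range{Y|p_1}$ (using that both $Y$ and $H$ factor through $X$), is precisely the reasoning the paper sketches informally in the paragraph and figure caption following the proposition (realizations in $\range{Y|p_0}\cap\range{Y|p_1}$ admit no confident decision, while those in the symmetric difference do), so it is essentially the same approach.
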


Note that, for any realization of uncertain variable $Y$ in the set $\range{Y|p_0}\cap\range{Y|p_1}$, there is not enough evidence to accept or reject either the null hypothesis or the alternative hypothesis. This is because these realizations can be caused by realizations of $X$ that are consistent with the null hypothesis $p_0$ or realizations of $X$ that are consistent with the alternative hypothesis $p_1$. On the other hand, if the realization of the measurement  $Y$ is in the set $(\range{Y|p_0}\setminus\range{Y|p_1})\cup(\range{Y|p_1}\setminus\range{Y|h_2})=\range{Y|p_0}\Delta\range{Y|p_1}$, we can confidently reject or accept the null hypothesis or the alternative hypothesis. Proposition~\ref{tho:bound} can be thought of as a non-stochastic equivalent of the Chernoff-Stein Lemma; see, e.g.,~\cite[Ch.\,11]{cover2012elements} for randomized hypothesis testing. The size of the set  $\range{Y|p_0}\Delta\range{Y|p_1}$ essentially captures the difference between the ranges $\range{Y|p_0}$ and $\range{Y|p_1}$ resembling the Kullback--Leibler divergence in a non-stochastic framework.

\section{Noiseless Privacy: \\ Definition, Guarantees, and Properties}
\label{sec:noiselessprivacy_def}
We model a \textit{private dataset by a realization of a vector-valued uncertain variable} $X:\Omega\rightarrow\mathbb{R}^n$ with $n$ denoting the number of individuals whose data is in the dataset. The dataset is therefore in the form of
\begin{align*}
X(\omega)=
\begin{bmatrix}
X_1(\omega) \\
X_2(\omega) \\
\vdots \\
X_n(\omega)
\end{bmatrix},
\end{align*}
where $X_i(\omega)\in \mathbb{R}$ is the data of the $i$-th individual. Evidently, each $X_i:\Omega\rightarrow\mathbb{R}$, $1\leq i\leq n$, is itself an uncertain variable. 

A data curator, in possession of the realization of uncertain variable $X$, i.e., the private dataset $X(\omega)$, must return a response to a query
$f:\range{X}\rightarrow\mathbb{R}^q$ for some $q\in\mathbb{N}$. 
The curator employs a mechanism $\mathfrak{M}:\mathbb{R}^q\rightarrow\mathbb{R}^q$ to generate a a privacy-preserving response. Throughout this paper, $\mathfrak{M}\circ f$  is referred to as the mechanism of the curator. This is the same language used in the privacy literature albeit without the presence of the randomness~\cite{chatzikokolakis2013broadening, dwork2006calibrating}. Therefore, the curator provides the response $Y(\omega)=\mathfrak{M}\circ f(X(\omega))$. By definition, $Y=\mathfrak{M}\circ f\circ X$ is an uncertain variable. In the remainder of this paper, we use the notation $x_{-i}$ to denote $(x_1,\dots,x_{i-1},x_{i+1},\dots,x_n)$ for vectors and $X_{-i}$ to denote $(X_1,\dots,X_{i-1},X_{i+1},\dots,X_n)$ for uncertain variables alike. In this notation, $-i$ refers essentially refers to the set of all individuals except the $i$-th one. 

\begin{mdframed}[backgroundcolor=red!10,rightline=false,leftline=false,topline=false,bottomline=false,roundcorner=2mm] \vspace*{-.05in}
\begin{definition}[Noiseless Privacy] A mechanism $\mathfrak{M}\circ f$ is $\epsilon$-noiselessly private, for $\epsilon>0$, if 
	\begin{align}
	|\range{Y|X_{-i}(\omega)=x_{-i}}|\leq 2^\epsilon,\quad  \forall x_{-i}\in\range{X_{-i}},\forall i.
	\end{align}
\end{definition}
\end{mdframed}
%


Note that this definition is akin to a noiseless differential privacy. This is because, instead of bounding the information leakage as in information-theoretic privacy~\cite{8662687}, the output realizations are restricted if one individual entry of the dataset changes. Note that, when the data of $i$-th individual changes, the output can take all the values within the set  $\range{Y|X_{-i}(\omega)=x_{-i}}$. If this set is not informative, i.e., it does not contain many elements, reverse engineering the data of $i$-th individual changes with knowledge of $X(\omega)$ even in the presence of side-channel information is a difficult task. In what follows, we use non-stochastic information theory to establish the extend of the privacy guarantees from noiseless privacy. Similar to differential privacy, we can also define a local version of noiseless privacy. 

\begin{mdframed}[backgroundcolor=red!10,rightline=false,leftline=false,topline=false,bottomline=false,roundcorner=2mm] \vspace*{-.05in}
	\begin{definition}[Local Noiseless Privacy] Assume that $f_i:(x_i)_{i=1}^n\mapsto x_i$ for each $i$. A mechanism $\mathfrak{M}$ is $\epsilon$-locally noiselessly private if $\mathfrak{M}\circ f_i$ is $\epsilon$-noiselessly private for all $i$.
	\end{definition}
\end{mdframed}

\subsection{Guarantee: Non-Stochastic Information Leakage}
We define a function $\psi_{i,v_{-i}}:\range{X}\rightarrow \range{X_i}\times \{v_{-i}\}$ replacing the value of $X_j(\omega)$ or the realization of $X_j$, for all $1\leq j\leq$ except for $j=i$, with  given constants $v_j$, i.e., $\psi_{i,v_{-j}}(X(\omega))=(v_1, \dots,v_{i-1},X_{i}(\omega),v_{i+1},\dots,v_n).$ Let $\Psi_i = \{\psi_{i,v_{-j}} \,|\, v_{-j}\in \range{X_{-i}}\}$ be the set of all such functions for $i\in\{1,\dots,n\}$. The uncertain variable $\psi_{i,v_{-i}}\circ X$ becomes unrelated (in the sense of~\cite{nair2013nonstochastic}) to $X_{-i}$ for all $\psi_{i,v_{-i}}\in \Psi_i $. 

This definition allows us to measure the amount of the information that the curator's mechanism leaks about the data of the $i$-the individual $X_i(\omega)$. For a given $\psi_{i,v_{-i}}\in \Psi_i $, let us define  $Y=\mathfrak{M}\circ f\circ \psi_{i,v_{-i}} \circ X$. Now, the information between $Y$ and $X_i$ captures how much more information can an adversary extract from $Y$ knowing the data of all the individuals except the $i$-the individual. This is because, here, we let the adversary to select any possible $\psi_{i,v_{-i}}$. 

\begin{mdframed}[backgroundcolor=red!10,rightline=false,leftline=false,topline=false,bottomline=false,roundcorner=2mm] \vspace*{-.05in}
\begin{theorem}[Non-Stochastic Information vs Noiseless Privacy]
\label{tho:informationleakage_privacy} Assume $X$ is a discrete uncertain variable, $Y=\mathfrak{M}\circ f\circ \psi_{i,v_{-i}} \circ X$,  and $\mathfrak{M}\circ f$ is $\epsilon$-noiselessly private. For any  $\psi_{i,v_{-i}}\in \Psi_i $,   
\begin{align}
0\leq I_\star (X_i;Y)\leq   L_0^{\mathrm{s}}(X_i;Y)\leq L_0(Y;X_i)\leq \epsilon .
\end{align}
\end{theorem}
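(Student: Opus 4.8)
The plan is to read the displayed chain as four separate links and to notice that only the rightmost one, $L_0(Y;X_i)\le\epsilon$, actually invokes the hypothesis of $\epsilon$-noiseless privacy; the other three are formal. I would therefore first clear the three formal links and then concentrate all effort on the last.

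For the formal links: the leftmost bound $0\le I_\star(X_i;Y)$ is immediate, since the maximin information equals $\log_2$ of the cardinality of an $\range{X_i|Y}$-overlap partition of $\range{X_i}$, and any partition of a nonempty set has at least one member, giving $I_\star(X_i;Y)\ge\log_2 1=0$. The second link $I_\star(X_i;Y)\le L_0^{\mathrm{s}}(X_i;Y)$ is exactly Proposition~\ref{prop:1} with $X$ replaced by $X_i$; its hypothesis that $Y$ be discrete holds because $Y$ is a function of the discrete uncertain variable $X$. The third link $L_0^{\mathrm{s}}(X_i;Y)\le L_0(Y;X_i)$ is purely definitional, as $L_0^{\mathrm{s}}(X_i;Y)=\min(L_0(X_i;Y),L_0(Y;X_i))$ is bounded above by either argument.

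The substance is the final inequality, and the key structural observation is that prefixing by $\psi_{i,v_{-i}}$ collapses the dependence of $Y$ on the whole dataset to a dependence on $X_i$ alone: since $Y(\omega)=\mathfrak{M}\circ f(v_1,\dots,v_{i-1},X_i(\omega),v_{i+1},\dots,v_n)$, the value of $Y$ is determined once $X_i(\omega)$ is fixed. Consequently each conditional range $\range{Y|x_i}$ is a singleton, so $|\range{Y|x_i}|=1$ for every $x_i\in\range{X_i}$, and the definition~\eqref{eqn:information2} reduces to $L_0(Y;X_i)=\max_{x_i\in\range{X_i}}\log_2(|\range{Y}|/1)=\log_2|\range{Y}|$. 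It then remains only to show $|\range{Y}|\le 2^\epsilon$.

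The final and most delicate step is to identify $\range{Y}$ with a conditional range of the original mechanism so that the noiseless-privacy bound applies. By construction $\range{Y}=\{\mathfrak{M}\circ f(v_{-i},x_i):x_i\in\range{X_i}\}$, i.e.\ the image of the slice through $v_{-i}$; I would match this against $\range{\mathfrak{M}\circ f\circ X\,|\,X_{-i}(\omega)=v_{-i}}$, whose cardinality the definition of $\epsilon$-noiseless privacy bounds by $2^\epsilon$. This matching is precisely the role of the decoupling map $\psi_{i,v_{-i}}$, which frees $X_i$ to range over all of $\range{X_i}$ while pinning $X_{-i}$ at the fixed witness $v_{-i}$. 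The point I expect to be the main obstacle is exactly this identification: I must verify that letting $x_i$ sweep the full marginal $\range{X_i}$ does not enlarge the set beyond what the conditional-range bound controls, which is where the unrelatedness of $\psi_{i,v_{-i}}\circ X$ and $X_{-i}$ is used. Granting the identification, $|\range{Y}|\le 2^\epsilon$ yields $L_0(Y;X_i)=\log_2|\range{Y}|\le\epsilon$, closing the chain.
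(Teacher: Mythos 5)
Your proposal is correct and follows essentially the same route as the paper's proof: the first three links are dispatched exactly as the paper does (nonnegativity of $I_\star$, Proposition~\ref{prop:1}, and the definition of $L_0^{\mathrm{s}}$ as a minimum), and the last link is obtained by noting that $\range{Y|x_i}$ is a singleton so that $L_0(Y;X_i)=\log_2|\range{Y}|$, then identifying $\range{Y}$ with $\range{Y|X_{-i}(\omega)=v_{-i}}$ to invoke the $2^\epsilon$ bound. The identification you flag as the delicate point is precisely the union-of-conditional-ranges equality $|\range{Y}|=\bigl|\bigcup_{x'_i\in\range{X_i}}\range{Y|X_i(\omega)=x'_i,X_{-i}(\omega)=v_{-i}}\bigr|=|\range{Y|X_{-i}(\omega)=v_{-i}}|$ that the paper writes down with no further justification, so your argument matches the paper's at the same level of rigor.
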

\end{mdframed}

\begin{proof} See Appendix~\ref{proof:tho:informationleakage_privacy}.
\end{proof}

Theorem~\ref{tho:informationleakage_privacy} shows that, by reducing $\epsilon$, we can reduce the amount of the leaked information about each individual. This makes sense. Consider the case where $\epsilon=+\infty$. In this case, the curator can report the output of the query $f(\psi_{i,v_{-i}} \circ X(\omega))$ completely (i.e., $\mathfrak{M}$ can be chosen to be equal identity) and the adversary, knowing $v_{-i}$, can compute the data of the data of the $i$-th individual $X_i(\omega)$ (at least if the adversary select the query to be linear with non-zero weight for the $i$-th individual). On the other hand, if $\epsilon=0$, the output becomes a constant that is independent of $X_{i}(\omega)$ and thus the adversary learns nothing new  about the data of the  $i$-th individual $X_i(\omega)$. 

\subsection{Guarantee: Zero-Error Capacity}

Let us consider a memory-less noiselessly-private communication channel. This can be seen as a non-stochastic equivalent of  differentially-private communication channels in~\cite{barthe2011information}. 

Let $\mathfrak{M}\circ f$ be a $\epsilon$-noiselessly private for some $\epsilon>0$.  For any given sequence of mappings $\{\psi^t_{i,v_{-i}}\}_{t\in\mathbb{N}}$ with $\psi^t_{i,v_{-i}}\in \Psi_i $, a memory-less $\epsilon$-noiselessly-private channel maps any uncertain time series $X=(X[k],\dots,X[1])$ to uncertain time series $Y=(Y[k],\dots,Y[1])$ such that $Y[\ell](\omega)=\mathfrak{M}\circ f\circ \psi^\ell_{i,v_{-i}} (X(\omega))$ for all $1\leq \ell \leq k$ and $k\in\mathbb{N}$.

This setup can be seen as a case in which the curator is reporting on a stream of data from the individuals. We can assume that an extremely strong adversary can set the realizations of the data of all individuals except the $i$-th individual. The capacity of the channel captures the amount of information that passes through a $\epsilon$-noiselessly private mechanism over time. 

\begin{mdframed}[backgroundcolor=red!10,rightline=false,leftline=false,topline=false,bottomline=false,roundcorner=2mm]  \vspace*{-.05in}
\begin{theorem}[Zero-Error Capacity vs Noiseless Privacy] Assume, for all $k$, $X[k]$ is a discrete uncertain variable, $Y[k]=\mathfrak{M}\circ f\circ \psi^\ell_{i,v_{-i}}\circ X[k]$, and $\mathfrak{M}\circ f$ is $\epsilon$-noiselessly private. For any $\psi_{i,v_{-i}}\in \Psi_i $, the zero-error capacity of memory-less $\epsilon$-noiselessly-private channel is bounded by 
	\begin{align}
C_0\leq \epsilon.
	\end{align}
\end{theorem}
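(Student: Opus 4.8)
The plan is to combine the single-letter capacity bound of Proposition~\ref{prop:zeroerrorcapacity} with a worst-case bound on the symmetrized non-stochastic information leakage of one channel use, the latter being exactly the $L_0(Y;X_i)\leq\epsilon$ estimate already extracted in the proof of Theorem~\ref{tho:informationleakage_privacy}. Since the channel is memory-less by construction, Proposition~\ref{prop:zeroerrorcapacity} applies verbatim and gives
\[
C_0 \leq \sup_{\range{X[k]}\subseteq\mathbb{X}} L_0^{\mathrm{s}}(X[k];Y[k]).
\]
It therefore suffices to show that $L_0^{\mathrm{s}}(X[k];Y[k])\leq\epsilon$ for every admissible input range $\range{X[k]}$ and every time index $k$.

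Next I would bound the symmetrized leakage by one of its two arguments. By definition $L_0^{\mathrm{s}}(X[k];Y[k])=\min(L_0(X[k];Y[k]),L_0(Y[k];X[k]))\leq L_0(Y[k];X[k])$, and expanding the definition of the leakage,
\[
L_0(Y[k];X[k])=\max_{x\in\range{X[k]}}\log_2\!\left(\frac{|\range{Y[k]}|}{|\range{Y[k]|x}|}\right)\leq \log_2\big(|\range{Y[k]}|\big),
\]
where the final inequality uses that every conditional range $\range{Y[k]|x}$ is nonempty, hence has cardinality at least one.

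The crux is then to bound $|\range{Y[k]}|$ by $2^\epsilon$. Because $Y[k]=\mathfrak{M}\circ f\circ \psi^\ell_{i,v_{-i}}\circ X[k]$ and the map $\psi^\ell_{i,v_{-i}}$ overwrites every coordinate except the $i$-th with the fixed constants $v_{-i}$, varying $X[k](\omega)$ can only move the $i$-th coordinate that is fed into $\mathfrak{M}\circ f$. Consequently $\range{Y[k]}$ is contained in the conditional range $\range{Y|X_{-i}(\omega)=v_{-i}}$ of the underlying mechanism, whose cardinality is at most $2^\epsilon$ by $\epsilon$-noiseless privacy. Hence $|\range{Y[k]}|\leq 2^\epsilon$, which yields $L_0^{\mathrm{s}}(X[k];Y[k])\leq \log_2(2^\epsilon)=\epsilon$. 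Taking the supremum over $\range{X[k]}$ and substituting into the capacity bound gives $C_0\leq\epsilon$.

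The step I expect to require the most care is the identification in the third paragraph: one must verify that the range of the composite map $\mathfrak{M}\circ f\circ \psi^\ell_{i,v_{-i}}$, taken over the whole uncertain variable $X[k]$, really coincides with a single conditional range of the form $\range{Y|X_{-i}=v_{-i}}$ to which the privacy definition applies, and that the resulting bound is uniform in the time index $\ell$ and in the chosen constants $v_{-i}$. Once this identification is made, the remainder is a direct chain of inequalities; indeed, the argument is essentially a time-uniform reuse of the $L_0(Y;X_i)\leq\epsilon$ estimate of Theorem~\ref{tho:informationleakage_privacy}.
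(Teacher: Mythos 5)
Your proposal is correct and follows essentially the same route as the paper: the paper's proof simply invokes Proposition~\ref{prop:zeroerrorcapacity} together with the bound $L_0^{\mathrm{s}}(X_i;Y)\leq L_0(Y;X_i)\leq\epsilon$ from Theorem~\ref{tho:informationleakage_privacy}, which is exactly the chain you rederive inline (your only cosmetic deviation is bounding $|\range{Y[k]\,|\,x}|\geq 1$ where the paper notes this conditional range is a singleton, yielding an inequality in place of the paper's equality, with the same conclusion).
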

\end{mdframed} 

\begin{proof}  The rest of the proof follows from  Proposition~\ref{prop:zeroerrorcapacity} and Theorem~\ref{tho:informationleakage_privacy}.
\end{proof}

%
%
%
%
%
%

\subsection{Guarantee: Non-Stochastic Hypothesis Testing}


In this part, our analysis is motivated by the definition of semantic security or  indistinguishability under chosen plaintext attack~\cite{katz2014introduction}. Assume that an adversary selects $i\in\{1,\dots,n\}$, $x^{}_i,x'_i\in\range{X_i}$, and provides this information to the curator. 
The curator uses uncertain variable $\overline{X}_i:\Omega\rightarrow\range{\overline{X}_i}:=\{x^{}_i,x'_i\}$ to constructs uncertain variable $\overline{X}=(X_{-i},\overline{X}_i)$.
Fix $\psi_{i,v_{-i}}\in\Psi_i$. The curator then generates a realization $\overline{X}(\omega)$, computes $Y(\omega)=\mathfrak{M}\circ f\circ\psi_{i,v_{-i}} (\overline{X}(\omega))$, and provides $Y(\omega)$ to the adversary. The adversary tests whether the realization of the data of individual $i$ is equal to $x^{}_i$ or $x'_i$ knowing that it is bound to be one of those values and knowing that the value of the data of all the other individuals is fixed to $v_{-i}$. We define the hypothesis uncertain variable $H$ using $g_H:\overline{X}(\omega)\mapsto H(\omega)$ as
\begin{align*}
H(\omega)=g_H(\overline{X}(\omega))=
\begin{cases}
p_0, & \overline{X}_i(\omega)=x^{}_i,\\
p_1, & \overline{X}_i(\omega)=x'_i.
\end{cases}
\end{align*}
The following theorem bounds the performance of the adversary for performing its hypothesis test. 

\begin{mdframed}[backgroundcolor=red!10,rightline=false,leftline=false,topline=false,bottomline=false,roundcorner=2mm]  \vspace*{-.05in}
\begin{theorem}[Hypothesis Testing vs Noiselss Privacy] \label{tho:nonstochastic_hypothesis_testing_privacy}
Assume $Y=\mathfrak{M}\circ f\circ\psi_{i,v_{-i}}\circ \bar{X}$ and $\mathfrak{M}\circ f$ is $\epsilon$-noiselessly private. Then, for any test $T$ and any  $\psi_{i,v_{-i}}\in \Psi_i $, the performance of the adversary is bounded by 
\begin{align}
\mathcal{P}(T)\leq \epsilon.
\end{align}
\end{theorem}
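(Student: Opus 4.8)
The plan is to reduce everything to the non-stochastic Chernoff--Stein-type bound of Proposition~\ref{tho:bound} and then control the relevant set size using noiseless privacy. Since $Y$ is discrete, Proposition~\ref{tho:bound} immediately gives $\mathcal{P}(T)\leq\log_2(|\range{Y|p_0}\Delta\range{Y|p_1}|)$ for every test $T\in\mathcal{C}(\range{H},\range{Y})$, so the whole argument collapses to showing that the symmetric difference $\range{Y|p_0}\Delta\range{Y|p_1}$ contains at most $2^\epsilon$ elements.

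First I would discard the symmetric-difference structure, which is only an obstruction here, by bounding $\range{Y|p_0}\Delta\range{Y|p_1}\subseteq\range{Y|p_0}\cup\range{Y|p_1}$. Because $\range{H}=\{p_0,p_1\}$ covers all of $\Omega$, the union is exactly the marginal range $\range{Y}$, so $|\range{Y|p_0}\Delta\range{Y|p_1}|\leq|\range{Y}|$. Thus it suffices to prove $|\range{Y}|\leq 2^\epsilon$, which ties the performance back to the privacy budget.

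Next I would bound $|\range{Y}|$ using the already-established Theorem~\ref{tho:informationleakage_privacy}. The crucial observation is that $\psi_{i,v_{-i}}$ freezes every coordinate except the $i$-th to $v_{-i}$ and the mechanism is deterministic; hence for each fixed value $t$ of the $i$-th entry the conditional range $\range{Y|t}$ is a singleton. Consequently $L_0(Y;X_i)=\max_{t\in\range{X_i}}\log_2(|\range{Y}|/|\range{Y|t}|)$ collapses exactly to $\log_2|\range{Y}|$. Applying Theorem~\ref{tho:informationleakage_privacy} (with $\overline{X}$ in place of $X$, noting that $\range{\overline{X}_i}=\{x_i,x'_i\}\subseteq\range{X_i}$, so the relevant output range is only a subset of the one bounded there) yields $\log_2|\range{Y}|=L_0(Y;X_i)\leq\epsilon$, i.e., $|\range{Y}|\leq 2^\epsilon$. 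Chaining the three inequalities gives $\mathcal{P}(T)\leq\log_2|\range{Y}|\leq\epsilon$.

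The main obstacle is the step identifying $\log_2|\range{Y}|$ with the leakage $L_0(Y;X_i)$: one must be careful that the $\psi_{i,v_{-i}}$ decoupling is precisely what makes the $i$-th entry range over its full marginal set while the others stay frozen, so that the noiseless-privacy bound applies without any conditional-range shrinkage. A pleasant by-product of passing through $|\range{Y}|\leq 2^\epsilon$ rather than through the crude estimate $|\range{Y}|\leq|\range{\overline{X}_i}|=2$ is that the claim also covers the regime $\epsilon<1$: there $2^\epsilon<2$ forces $|\range{Y}|\leq 1$, so the two candidate outputs coincide, the symmetric difference is empty, and $\mathcal{P}(T)=\log_2 0=-\infty\leq\epsilon$ holds trivially.
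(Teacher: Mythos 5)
Your proof is correct and follows essentially the same route as the paper: both apply the non-stochastic hypothesis-testing bound of Proposition~\ref{tho:bound} for discrete $Y$ and then show that the symmetric difference $\range{Y|p_0}\Delta\range{Y|p_1}$ contains at most $2^\epsilon$ elements because of the noiseless-privacy budget. The only (cosmetic) difference is that you bound the symmetric difference by $\range{Y}$ and invoke Theorem~\ref{tho:informationleakage_privacy} to get $|\range{Y}|\leq 2^\epsilon$, whereas the paper bounds it directly by $\range{Y|X_{-i}(\omega)=v_{-i}}$ and applies the privacy definition---two phrasings of the same containment, since the proof of Theorem~\ref{tho:informationleakage_privacy} is precisely the identity $|\range{Y}|=|\range{Y|X_{-i}(\omega)=v_{-i}}|\leq 2^\epsilon$.
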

\end{mdframed}

\begin{proof}
See Appendix~\ref{proof:tho:nonstochastic_hypothesis_testing_privacy}.
\end{proof}

Bounding the performance of a hypothesis-testing adversary is in essence close to identifiability~\cite{wang2016relation,bkakria2018linking} for which privacy preservation relates to the potential of an adversary identifying the private data of individuals based on the received outputs.

\subsection{Guarantee: Performance of Adversaries with Stochastic Priors}
In this subsection, we briefly assume that the dataset is randomly distributed according to the probability density function $p$, i.e., for any Lebesgue-measurable set $\mathcal{X}\subseteq \range{X}$, $\mathbb{P}\{X\in  \mathcal{X}\}=\int_{x\in\range{X}}\xi(x)\mathds{d}\mu(x)$. We also consider an adversary that knows the realizations of all the entries of the dataset except the entry of the $i$-th individual. It constructs an estimate of the missing entry $X_i$ using an estimator $\hat{X}_i(X_{-i},\mathfrak{M}\circ f(X))$ using its prior information $X_{-i}$ and the response $\mathfrak{M}\circ f(X)$. 

\begin{mdframed}[backgroundcolor=red!10,rightline=false,leftline=false,topline=false,bottomline=false,roundcorner=2mm]  \vspace*{-.05in}
	\begin{theorem}[Stochastic Prior vs Noiselss Privacy]
		 \label{tho:stochastic_knowledge}
		 Assume that $\rho=\inf_{x\in\range{X}}\xi(x)>0$, $\mathfrak{M}\circ f$ is $\epsilon$-noiselessly private, and $f^{-1}\circ \mathfrak{M}^{-1}(y)$ is a connected set for any $y\in\range{Y}$. For any $p\in\mathbb{N}$,
		\begin{align}
	\mathbb{E}\{(X_i-\hat{X}_i(X_{-i},&\mathfrak{M}\circ f(X)))^p|X_{-i}\}\nonumber\\
	&\geq \left(\frac{\rho \mu(\range{X_i})^{p+1}}{2^{2p+2}}\right)2^{-\epsilon(p+1)}.
		\end{align}		
	\end{theorem}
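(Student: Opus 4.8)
The plan is to fix a realization $X_{-i}=x_{-i}$ and study the conditional law of $X_i$, since the claim is a pointwise lower bound on a conditional expectation. With $x_{-i}$ frozen, the response is a deterministic function of $X_i$ alone, $y(x_i):=\mathfrak{M}\circ f(x_{-i},x_i)$, and the estimate $\hat{X}_i(x_{-i},Y)$ is constant on each level set $\{x_i:y(x_i)=y_0\}$; these level sets partition $\range{X_i|x_{-i}}$, one per attained output value. First I would invoke the defining inequality of $\epsilon$-noiseless privacy, $|\range{Y|X_{-i}(\omega)=x_{-i}}|\le 2^\epsilon$, so that at most $m\le 2^\epsilon$ distinct outputs occur and the partition has at most $2^\epsilon$ cells. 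Then I would use the connectedness hypothesis on $f^{-1}\circ\mathfrak{M}^{-1}(y)$ to argue that each cell is an interval $A_j\subseteq\mathbb{R}$, so that $\range{X_i|x_{-i}}$ is carved into at most $2^\epsilon$ intervals.

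The second step is a pigeonhole estimate followed by a one-dimensional ``no estimator can be close to a whole interval'' bound. Since the intervals $A_1,\dots,A_m$ partition $\range{X_i|x_{-i}}$, the longest one, $A_{j^\star}$ of length $\ell^\star$, satisfies $\ell^\star\ge \mu(\range{X_i|x_{-i}})2^{-\epsilon}$. On $A_{j^\star}$ the adversary must commit to a single constant $c:=\hat{X}_i(x_{-i},y_{j^\star})$, and the key geometric fact is that the points of $A_{j^\star}$ within distance $\ell^\star/4$ of $c$ lie in an interval of length $\ell^\star/2$, so at least half of $A_{j^\star}$ sits at distance $\ge\ell^\star/4$ from $c$. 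Taking $p$ even (so $(x_i-c)^p=|x_i-c|^p\ge0$) and restricting the conditional expectation to $A_{j^\star}$ with the density lower bound $\rho$ gives
\begin{align*}
\mathbb{E}\{(X_i-\hat{X}_i)^p|X_{-i}=x_{-i}\}
&\ge \rho\int_{A_{j^\star}}|x_i-c|^p\,dx_i\\
&\ge \rho\left(\frac{\ell^\star}{4}\right)^p\frac{\ell^\star}{2}
=\frac{\rho(\ell^\star)^{p+1}}{2^{2p+1}}.
\end{align*}
Substituting $\ell^\star\ge\mu(\range{X_i})2^{-\epsilon}$ then produces the claimed bound, with a factor of two to spare.

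I expect the main obstacle to be making the passage from \emph{few outputs} to \emph{few input intervals} geometrically rigorous. Noiseless privacy only limits the number of distinct outputs; converting this into a bound on the number of \emph{intervals} of inputs requires each level set to be connected along the $x_i$-axis, which is exactly what the connectedness hypothesis must provide. Stated as connectedness of the full preimage $f^{-1}\circ\mathfrak{M}^{-1}(y)\subseteq\mathbb{R}^n$, however, it does not by itself force the one-dimensional slices at $X_{-i}=x_{-i}$ to be connected (a connected set can have disconnected slices), so the clean argument needs either convex preimages or connectedness imposed slice-wise. A secondary point is density bookkeeping: the hypothesis bounds the \emph{joint} density below by $\rho$, whereas the conditional expectation integrates against $\xi(x_i|x_{-i})$, so one must check that the conditional density inherits a lower bound of order $\rho$ and that $\range{X_i|x_{-i}}$ fills out $\range{X_i}$ (otherwise $\mu(\range{X_i})$ should read $\mu(\range{X_i|x_{-i}})$). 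Finally, the inequality is only a meaningful error bound for even $p$ or for the absolute moment $\mathbb{E}\{|X_i-\hat{X}_i|^p|X_{-i}\}$, since for odd $p$ an adversary could drive $(X_i-\hat{X}_i)^p$ negative; I would read $p$ as even throughout.
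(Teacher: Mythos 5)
Your proposal is correct and follows essentially the same route as the paper's own proof: pigeonhole over the at most $2^\epsilon$ attainable outputs to find a preimage of $X_i$-measure at least $\mu(\range{X_i})2^{-\epsilon}$, use the connectedness hypothesis to identify it with an interval, and lower-bound the error integral restricted to that interval via the density bound $\rho$ and the fact that at least half of the interval lies at distance at least a quarter of its length from the (constant) estimate. The caveats you flag --- that the bound is only meaningful for even $p$, that connectedness is really needed for the $x_i$-slices rather than the full preimage in $\mathbb{R}^n$, and that the conditional density inherits the lower bound $\rho$ only after some bookkeeping --- are genuine, but the paper's proof glosses over exactly the same points, so they do not separate your argument from it.
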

\end{mdframed}

\begin{proof}
	See Appendix~\ref{proof:tho:stochastic_knowledge}.
\end{proof}

The lower bound on the adversary's estimation performance in Theorem~\ref{tho:stochastic_knowledge} is an decreasing function of $\epsilon$. Therefore, as expected and in-line with the earlier results, by decreasing $\epsilon$, we can reduce the adversary's ability to infringe on the privacy of any individual in the dataset even if the adversary knows the data of all the other individuals. 

\subsection{Guarantee: Stochastic Maximal Leakage}
In this section, we can recreate the stochastic framework for information leakage in~\cite{issa2018operational} by again endowing all the uncertain variables with a measure. This way, we can define the maximal stochastic leakage from $X$ to $Y$ as
\begin{align*}
\mathcal{L}_{\mathrm{c}}(X\rightarrow Y)=
\sup_{U-X-Y-\hat{U}} \log\left( \frac{\mathbb{P}\{U=\hat{U}\}}{\max_{u\in\range{U}}P_U(u)}\right)
\end{align*} 
where the supremum is taken over all random variables $U,\hat{U}$ taking values in the same finite arbitrary alphabets. Here, $U-X-Y-\hat{U}$ states that these variables from a Markov chain in the introduced order. It was shown in~\cite{issa2018operational} that 
\begin{align*}
\mathcal{L}_{\mathrm{c}}(X\rightarrow Y)
&=\log\left(\sum_{y\in\range{Y}}\max_{x\in\range{X}:P_{X}(x)>0}P_{Y|X}(y|x)\right)\\
&=I_{\infty} (X;Y).
\end{align*}

\begin{mdframed}[backgroundcolor=red!10,rightline=false,leftline=false,topline=false,bottomline=false,roundcorner=2mm]  \vspace*{-.05in}
	\begin{theorem}[Maximal Leakage vs Noiseless Privacy] Assume $Y=\mathfrak{M}\circ f\circ\psi_{i,v_{-i}}\circ X$ and $\mathfrak{M}\circ f$ is $\epsilon$-noiselessly private. Then,  $\mathcal{L}_{\mathrm{c}}(X_i\rightarrow Y)\leq \epsilon.$
	\end{theorem}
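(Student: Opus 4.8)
The plan is to combine the closed-form expression for maximal leakage quoted above with the deterministic dependence of $Y$ on $X_i$ induced by $\psi_{i,v_{-i}}$, and then to invoke the cardinality bound that noiseless privacy enforces on the output range. The logarithm is read as $\log_2$ throughout, consistent with the paper's convention for $H_0$, $I_0$, and $L_0$, so that $\log_2 2^\epsilon=\epsilon$.

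First I would record the structural observation that, once $\psi_{i,v_{-i}}$ has pinned every coordinate other than the $i$-th to the constant $v_{-i}$, the output becomes a deterministic function of $X_i$ alone: writing $\phi:=\mathfrak{M}\circ f(v_1,\dots,v_{i-1},\cdot\,,v_{i+1},\dots,v_n)$ we have $Y=\phi(X_i)$. Consequently the channel $P_{Y\mid X_i}$ is deterministic, so $P_{Y\mid X_i}(y\mid x)=\mathds{1}[y=\phi(x)]\in\{0,1\}$ for every $x$ in the support of $X_i$. Substituting into $\mathcal{L}_{\mathrm{c}}(X_i\rightarrow Y)=\log_2\left(\sum_{y\in\range{Y}}\max_{x}P_{Y\mid X_i}(y\mid x)\right)$, each inner maximum is at most $1$ because it is a conditional probability, so the sum is bounded by the number of summands. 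This yields the standard cardinality bound $\mathcal{L}_{\mathrm{c}}(X_i\rightarrow Y)\le \log_2|\range{Y}|$, with the deterministic structure only serving to make the step transparent (indeed tight).

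It then remains to bound $|\range{Y}|$ by $2^\epsilon$. The cleanest route is to reuse Theorem~\ref{tho:informationleakage_privacy}: since $Y=\phi(X_i)$ forces $|\range{Y\mid x_i}|=1$ for every $x_i\in\range{X_i}$, the definition of non-stochastic information leakage collapses to $L_0(Y;X_i)=\log_2|\range{Y}|$, and Theorem~\ref{tho:informationleakage_privacy} already supplies $L_0(Y;X_i)\le\epsilon$. Hence $\log_2|\range{Y}|\le\epsilon$, and combining with the previous paragraph gives $\mathcal{L}_{\mathrm{c}}(X_i\rightarrow Y)\le\log_2|\range{Y}|\le\epsilon$, which is the claim. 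Equivalently, one can argue directly that $\range{Y}$ equals the conditional range $\range{\mathfrak{M}\circ f\circ X\mid X_{-i}(\omega)=v_{-i}}$, which $\epsilon$-noiseless privacy constrains to at most $2^\epsilon$ elements.

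The main obstacle I anticipate lies precisely in this identification $\range{Y}=\range{\mathfrak{M}\circ f\circ X\mid X_{-i}(\omega)=v_{-i}}$. After applying $\psi_{i,v_{-i}}$ the variable $X_i$ sweeps its \emph{full} marginal range $\range{X_i}$, whereas the conditional range in the privacy definition lets $X_i$ run only over $\range{X_i\mid X_{-i}=v_{-i}}$; the two coincide exactly when the joint range factorizes, i.e. when $X_i$ and $X_{-i}$ are unrelated, which is the property $\psi_{i,v_{-i}}\circ X$ is constructed to enforce. This is the point that must be argued rather than assumed. Routing the proof through Theorem~\ref{tho:informationleakage_privacy} is attractive precisely because that identification has already been carried out there, so the remaining work here reduces to the deterministic cardinality argument of the first two paragraphs.
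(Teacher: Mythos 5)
Your proof is correct and follows essentially the same route as the paper's one-line argument: bound $\mathcal{L}_{\mathrm{c}}(X_i\rightarrow Y)$ by $\log_2|\range{Y}|=H_0(Y)$ (the paper cites~\cite[Lemma~1 \& Example~6]{issa2018operational}, you derive it directly from the closed-form expression using the determinism of the channel $X_i\mapsto Y$), and then use $|\range{Y}|=|\range{Y|X_{-i}(\omega)=v_{-i}}|\leq 2^\epsilon$ from the definition of noiseless privacy. The range-identification subtlety you flag in your final paragraph is genuine, but the paper's own proof (and the proof of Theorem~\ref{tho:informationleakage_privacy} that it implicitly relies on) makes the same identification without comment, so your treatment is, if anything, the more careful one.
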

\end{mdframed}

\begin{IEEEproof} Note that $\sup_{P_{Y|X}}\mathcal{L}_{\mathrm{c}}(X\rightarrow Y)\leq H_0(Y)$ because of~\cite[Lemma~1 \& Example~6]{issa2018operational}. Furthermore, 
	$|\range{\mathfrak{M}\circ f\circ \psi_{i,v_{-i}} \circ X}|
	=|\range{Y|X_{-i}(\omega)=v_{-i}}|\leq 2^\epsilon.$
\end{IEEEproof}

Evidently, the amount of the leaked information is upper bounded by the privacy budget. Hence, by reducing the privacy budget, we can minimize the amount of the leaked information. 

\subsection{Property: Composition of Noiselessly-Private Mechanisms}
Composition of differentially-private mechanisms~\cite{dwork2014algorithmic, dwork2010boosting,kairouz2017composition} is an important result showing that the privacy budgets add up when reporting on multiple queries on the same private dataset. In what follows, we show that the same also applies to noiseless privacy.

\begin{mdframed}[backgroundcolor=red!10,rightline=false,leftline=false,topline=false,bottomline=false,roundcorner=2mm]  \vspace*{-.05in}
\begin{theorem}[Composition of Noiselessly-Private Mechanisms] \label{tho:composition_theorem} Let $\mathfrak{M}_1$ and $\mathfrak{M}_2$ be such that $\mathfrak{M}_1\circ f$ and $\mathfrak{M}_2\circ f$ are  $\epsilon_1$-noiseless private and $\epsilon_2$-noiseless private, respectively. Then, $(\mathfrak{M}_1,\mathfrak{M}_2)\circ f$ is $(\epsilon_1+\epsilon_2)$-noiseless private.
\end{theorem}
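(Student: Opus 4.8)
The plan is to reduce the whole statement to a single cardinality estimate on the conditional range of the composed output, and then bound that cardinality by a product. Write $Y_1=\mathfrak{M}_1\circ f\circ X$ and $Y_2=\mathfrak{M}_2\circ f\circ X$, so that the output of the composed mechanism $(\mathfrak{M}_1,\mathfrak{M}_2)\circ f$ is the uncertain variable $Y=(Y_1,Y_2)$. Fix an arbitrary individual index $i$ and an arbitrary $x_{-i}\in\range{X_{-i}}$. Unfolding the definition of noiseless privacy, it suffices to show that $|\range{(Y_1,Y_2)\,|\,X_{-i}(\omega)=x_{-i}}|\leq 2^{\epsilon_1+\epsilon_2}$.

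The key step is the elementary set inclusion
\begin{align*}
\range{(Y_1,Y_2)\,|\,X_{-i}(\omega)=x_{-i}}\subseteq \range{Y_1\,|\,X_{-i}(\omega)=x_{-i}}\times\range{Y_2\,|\,X_{-i}(\omega)=x_{-i}}.
\end{align*}
Indeed, any pair $(y_1,y_2)$ in the left-hand set is realized by a common $\omega$ with $X_{-i}(\omega)=x_{-i}$, for which simultaneously $y_1=Y_1(\omega)$ and $y_2=Y_2(\omega)$; hence each coordinate lies in the corresponding conditional marginal range. Only this one inclusion is needed. The reverse inclusion fails in general, because a value of $X_i$ producing $y_1$ under $\mathfrak{M}_1\circ f$ need not be the same value producing $y_2$ under $\mathfrak{M}_2\circ f$, so the shared realization of $X_i$ couples the two outputs and can only shrink the joint range relative to the product; but a smaller joint range only helps the bound.

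Taking cardinalities and using that a subset is no larger than the product of the factor sets gives $|\range{(Y_1,Y_2)\,|\,x_{-i}}|\leq |\range{Y_1\,|\,x_{-i}}|\cdot|\range{Y_2\,|\,x_{-i}}|$. I then invoke the hypotheses directly: since $\mathfrak{M}_1\circ f$ is $\epsilon_1$-noiselessly private the first factor is at most $2^{\epsilon_1}$, and since $\mathfrak{M}_2\circ f$ is $\epsilon_2$-noiselessly private the second factor is at most $2^{\epsilon_2}$, whence the product is at most $2^{\epsilon_1+\epsilon_2}$. As $i$ and $x_{-i}$ were arbitrary, this yields $(\epsilon_1+\epsilon_2)$-noiseless privacy. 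I do not expect a genuine obstacle here: the entire argument is the observation that a joint conditional range embeds into the product of the marginal conditional ranges, so that the Hartley-type counts are subadditive. The only thing to be careful about is the bookkeeping, namely that the conditioning event $X_{-i}(\omega)=x_{-i}$ is identical for both components (it is, since both mechanisms are driven by the same $X$) and that the alphabet of the composed mechanism is the product space, so that counting pairs is the correct notion of the output range.
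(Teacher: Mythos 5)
Your proof is correct and follows essentially the same route as the paper's: both rest on the single inclusion $\range{(Y_1,Y_2)\,|\,X_{-i}(\omega)=x_{-i}}\subseteq \range{Y_1\,|\,X_{-i}(\omega)=x_{-i}}\times\range{Y_2\,|\,X_{-i}(\omega)=x_{-i}}$, followed by submultiplicativity of the size of the conditional range (the paper phrases this with Lebesgue measure and logarithms, while you work directly with cardinalities as in the definition, which is an immaterial difference). No gaps.
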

\end{mdframed}

\begin{proof} See Appendix~\ref{proof:tho:composition_theorem}.
\end{proof}

\subsection{Property: Post-Processing of Noiselessly-Private Mechanisms}
Finally, an important property of differentially-private mechanisms and information-theoretic privacy is that the privacy guarantees do not weaken by post-processing privacy-preserving outputs~\cite{dwork2014algorithmic}. In what follows, this also holds for noiselessly-private mechanisms as well.

\begin{mdframed}[backgroundcolor=red!10,rightline=false,leftline=false,topline=false,bottomline=false,roundcorner=2mm]  \vspace*{-.05in}
	\begin{theorem}[Post-Processing of Noiselessly-Private Mechanisms] \label{tho:postprocessing} Let $\mathfrak{M}$ be such that $\mathfrak{M}\circ f$ is $\epsilon$-noiseless private. Then, $g\circ \mathfrak{M}\circ f$ is also $\epsilon$-noiseless private for any mapping $g$.
	\end{theorem}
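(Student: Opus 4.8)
The plan is to reduce everything to the elementary fact that applying a deterministic function cannot increase the cardinality of a set, combined with the observation that post-processing commutes appropriately with the conditional-range operation. Let $Y=\mathfrak{M}\circ f\circ X$ denote the output of the original mechanism and let $Z=g\circ \mathfrak{M}\circ f\circ X=g\circ Y$ denote the post-processed output. The goal is to verify the defining inequality for noiseless privacy, namely that $|\range{Z|X_{-i}(\omega)=x_{-i}}|\leq 2^\epsilon$ for every $i$ and every $x_{-i}\in\range{X_{-i}}$, using only the hypothesis that $|\range{Y|X_{-i}(\omega)=x_{-i}}|\leq 2^\epsilon$.

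First I would establish the key structural identity relating the conditional range of $Z$ to that of $Y$. Since $Z=g\circ Y$ as uncertain variables (composition of mappings on $\Omega$), conditioning on the same event $X_{-i}(\omega)=x_{-i}$ gives
\begin{align*}
\range{Z|X_{-i}(\omega)=x_{-i}}=g\big(\range{Y|X_{-i}(\omega)=x_{-i}}\big),
\end{align*}
that is, the conditional range of the post-processed variable is exactly the image under $g$ of the conditional range of $Y$. This follows directly from the definition of the conditional range as the set of realizations $\{Z(\omega):X_{-i}(\omega)=x_{-i}\}$ and the pointwise relation $Z(\omega)=g(Y(\omega))$.

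Next I would invoke the cardinality bound. For any mapping $g$ and any set $\mathcal{S}$ in its domain, $|g(\mathcal{S})|\leq|\mathcal{S}|$, because $g$ cannot send distinct images to more points than it has preimages. Applying this with $\mathcal{S}=\range{Y|X_{-i}(\omega)=x_{-i}}$ and chaining with the hypothesis yields
\begin{align*}
|\range{Z|X_{-i}(\omega)=x_{-i}}|=\big|g\big(\range{Y|X_{-i}(\omega)=x_{-i}}\big)\big|\leq |\range{Y|X_{-i}(\omega)=x_{-i}}|\leq 2^\epsilon.
\end{align*}
Since this holds for all $i$ and all $x_{-i}\in\range{X_{-i}}$, the mechanism $g\circ\mathfrak{M}\circ f$ satisfies the definition of $\epsilon$-noiseless privacy.

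There is essentially no hard step here: the result is a direct consequence of the combinatorial definition of noiseless privacy, which is precisely what makes the non-stochastic framework clean. The only point requiring a little care is the structural identity in the first step, where one must confirm that conditioning and post-processing interact correctly at the level of uncertain variables (mappings on $\Omega$) rather than accidentally enlarging the range; this is immediate once one writes $Z=g\circ Y$ and tracks realizations. I expect no genuine obstacle, and the proof is notably simpler than its differential-privacy counterpart, which requires a measure-theoretic argument over output distributions.
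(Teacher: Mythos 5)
Your proof is correct and follows essentially the same route as the paper's: the paper's one-line argument is precisely the cardinality inequality $|\range{g\circ \mathfrak{M}\circ f(X)|X_{-i}(\omega)=x_{-i}}|\leq |\range{\mathfrak{M}\circ f(X)|X_{-i}(\omega)=x_{-i}}|$, which you justify in slightly more detail via the identity $\range{Z|X_{-i}(\omega)=x_{-i}}=g(\range{Y|X_{-i}(\omega)=x_{-i}})$ and the fact that images under a mapping cannot increase cardinality. No gaps.
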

\end{mdframed}

\begin{proof}
See Appendix~\ref{proof:tho:postprocessing}.
\end{proof}

\section{Noiseless Privacy: Satisfaction}
\label{sec:noiselessprivacy_satisfaction}
We can ensure noiseless privacy using non-stochastic approaches, such as binning or quantization. To do so, first, we define linear quantizers. 

\begin{definition}[Linear Quantizer] A $q$-level quantizer $\mathcal{Q}:[x_{\min},x_{\max}]\rightarrow \{b_1,\dots,b_q\}$ is a piecewise constant function  defined as
\begin{align*}
Q(x)=
\begin{cases}
b_1, & x\in [x_1,x_2),\\
b_2, & x\in [x_2,x_3),\\
\hspace{.05in} \vdots & \hspace{.4in}\vdots\\
b_{q-1}, & x\in[x_{q-1},x_{q}),\\
b_q, & x\in[x_{q},x_{q+1}],
\end{cases}
\end{align*}
where $(b_i)_{i=1}^q$ are distinct symbols and $x_1\leq x_2\leq \cdots \leq x_q$ are real numbers such that $x_1=x_{\min}$, $x_{q+1}=x_{\max}$, $x_{i+1}-x_{i}=(x_{\max}-x_{\min})/q$ for all $1\leq i\leq q$. 
\end{definition}

We can show that linear quantizers can achieve noiseless privacy for any query on private datasets. This is proved in the next theorem.

\begin{mdframed}[backgroundcolor=red!10,rightline=false,leftline=false,topline=false,bottomline=false,roundcorner=2mm]  \vspace*{-.05in}
\begin{theorem} \label{tho:noiselessprivacyworks} Let $\range{f(X)|X}\subseteq[y_{\min},y_{\max}]$. Define sensitivity of query $f$ as
	\begin{align*}
		\mathcal{S}(f):=&\sup_{x_{-i}\in\range{X_{-i}}} \mu(f(\range{X_i}\times\{x_{-i}\}))\\
		=&\sup_{x_{-i}\in\range{X_{-i}}} \sup_{x^{}_i,x'_i\in\range{X_i}} |f(x'_i,x_{-i})-f(x^{}_i,x_{-i})|.
	\end{align*}
The mechanism $\mathcal{M}\circ f$ is $\epsilon$-noiseless private if $\mathcal{M}$ is a $q$-level quantizer with
\begin{align*}
q\leq \frac{2^\epsilon(y_{\max}-y_{\min})}{\mathcal{S}_f}.
\end{align*}
\end{theorem}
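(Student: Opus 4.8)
The plan is to verify the defining inequality of noiseless privacy directly from the geometry of the linear quantizer, i.e., to show $|\range{Y|X_{-i}(\omega)=x_{-i}}|\le 2^\epsilon$ for every index $i$ and every $x_{-i}\in\range{X_{-i}}$, where $Y=\mathcal{Q}\circ f\circ X$ and $\mathcal{Q}$ is the $q$-level quantizer taken on $[x_{\min},x_{\max}]=[y_{\min},y_{\max}]$ (legitimate because $\range{f(X)|X}\subseteq[y_{\min},y_{\max}]$ by hypothesis). First I would fix $i$ and $x_{-i}$ and describe the conditional range explicitly. Conditioning on $X_{-i}(\omega)=x_{-i}$ leaves only $X_i$ free, so the query values that can occur form the set $f(\range{X_i}\times\{x_{-i}\})\subseteq\mathbb{R}$, and $\range{Y|X_{-i}(\omega)=x_{-i}}$ is exactly the set of quantizer symbols $\{\mathcal{Q}(y):y\in f(\range{X_i}\times\{x_{-i}\})\}$. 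Hence the cardinality to be bounded equals the number of distinct quantization bins that the value-set $f(\range{X_i}\times\{x_{-i}\})$ meets.

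Next I would control the diameter of that value-set. By the second expression for the sensitivity, any two points of $f(\range{X_i}\times\{x_{-i}\})$ differ by at most $\mathcal{S}(f)$, so the whole set is contained in an interval of length at most $\mathcal{S}(f)$. Each bin of the linear quantizer has width $w=(y_{\max}-y_{\min})/q$, so an interval of length $\mathcal{S}(f)$ overlaps at most on the order of $\mathcal{S}(f)/w=\mathcal{S}(f)q/(y_{\max}-y_{\min})$ bins (with boundary/endpoint contributions handled by a ceiling). The hypothesis $q\le 2^\epsilon(y_{\max}-y_{\min})/\mathcal{S}(f)$ rearranges to $\mathcal{S}(f)q/(y_{\max}-y_{\min})\le 2^\epsilon$, so the number of bins met, and therefore $|\range{Y|X_{-i}(\omega)=x_{-i}}|$, is at most $2^\epsilon$. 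Since $i$ and $x_{-i}$ were arbitrary, this establishes $\epsilon$-noiseless privacy.

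The main obstacle is precisely the bin-counting step: an interval of length $\mathcal{S}(f)$ does not in general meet exactly $\mathcal{S}(f)/w$ bins, because its endpoints can straddle bin boundaries and pick up additional overlapping bins, so the honest worst-case count carries a ceiling/off-by-one term. Obtaining the clean constant $2^\epsilon$ therefore requires either absorbing that boundary term or arguing that the extremal alignment is not attained; I would make the worst-case placement of $f(\range{X_i}\times\{x_{-i}\})$ relative to the uniform bin grid explicit in order to pin down the exact bound and confirm the stated constant.
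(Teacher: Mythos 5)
Your proposal follows essentially the same route as the paper's own proof: fix $i$ and $x_{-i}$, identify $\range{Y|X_{-i}(\omega)=x_{-i}}$ with the set of quantizer symbols hit by $f(\range{X_i}\times\{x_{-i}\})$, bound the extent of that value-set by $\mathcal{S}(f)$, and count how many bins of width $w=(y_{\max}-y_{\min})/q$ it can meet. The one methodological difference is minor but in your favor: the paper invokes continuity of $f$ and connectedness of $\range{X_i}$ so that the image is an interval whose Lebesgue measure equals the sensitivity, whereas you use only the diameter bound coming from the second expression for $\mathcal{S}(f)$, which places the image inside an interval of length $\mathcal{S}(f)$ without needing continuity or connectedness at all.

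The obstacle you flag in your last paragraph is genuine, and it is \emph{not} resolved in the paper either. The paper's proof simply asserts that the quantized image ``can at most contain $q\,\mu(f(\range{X_i}\times\{x_{-i}\}))/(y_{\max}-y_{\min})$ points,'' i.e., the clean ratio $\mathcal{S}(f)\,q/(y_{\max}-y_{\min})\leq 2^\epsilon$ with no ceiling or boundary term; note that this count is already absurd when $\mathcal{S}(f)<w$, since it is then below $1$ while the conditional range is nonempty. The honest worst case is what you describe: an interval of length $m$ can meet up to $\lceil m/w\rceil+1$ bins when it straddles grid points. Concretely, if $\mathcal{S}(f)=2^\epsilon w$ and the image is $(x_j-\delta,\;x_j-\delta+2^\epsilon w)$ for small $\delta>0$, it intersects $2^\epsilon+1$ bins, so the conditional range can have $2^\epsilon+1$ elements and the stated privacy bound fails by one. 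Repairing this requires either a slightly stronger hypothesis on the quantizer (e.g.\ $q\leq(2^\epsilon-1)(y_{\max}-y_{\min})/\mathcal{S}(f)$, or flooring so that at most $2^\epsilon$ bins can be met for any alignment) or a correspondingly weakened conclusion. So your plan is at least as complete as the paper's argument, and carrying out your final bin-counting step honestly would in fact close a gap that the published proof glosses over.
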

\end{mdframed}

\begin{proof}
See Appendix~\ref{proof:tho:noiselessprivacyworks}.
\end{proof}

\section{Experiments}
\label{sec:experiments}
\subsection{Energy Data: Reporting Aggregate}
For this part, we use a publicly available dataset from the  Ausgrid\footnote{\url{https://www.ausgrid.com.au/Industry/Innovation-and-research/Data-to-share/Solar-home-electricity-data}} containing half-hour smart meter measurements for 300 randomly-selected homes with rooftop solar systems over the period from 1 July 2010 to 30 June 2013. In this paper, we use the data over July 2012 to June 2013.

\begin{figure}
\begin{center}
\fbox{%
\pseudocode[colsep=-1em,addtolength=0em]{%
    \textbf{Adversary} \<\< \textbf{Curator} \\[][\hline]
    \text{\raisebox{-4pt}{select $i_0$ and $i_1$}} \<\sendmessageright*[1.1cm]{i_0 \text{ and } i_1}\< \raisebox{-4pt}{ $j\leftarrow \{0,1\}$}  \\
    \<\<  \text{select $i_2,\dots,i_n$} \\[-1em]
    \<\sendmessageleft*[1.1cm]{y,t}\<  \raisebox{-4pt}{$y=\mathfrak{M}\circ f(x_{i_j,t},x_{i_2,t},\dots,x_{i_n,t})$} \\[-1em]
    \raisebox{-4pt}{\quad estimate $j$}\< \sendmessageright*[1.1cm]{\hat{j}} \< \raisebox{-4pt}{\textbf{return} $j=\hat{j}$}
}
}
\end{center}
\caption{
\label{fig:game}
The timing of a game used for evaluating the ability of an adversary in guessing if the data of a particular individual belongs to a publicly-released noiselessly-private aggregate statistics.
}
\end{figure}

Let $x_{i,t}$ denote the consumption of house $i$ at day $t$. We consider reporting aggregate statistics
\begin{align*}
y_t=f((x_{i,t})_{i=1}^n)=
\frac{1}{n}(x_{1,t}+\cdots+x_{n,t}), \; \forall t.
\end{align*}
Here, $f$ denotes the query. 
We particularly use the mechanism in Theorem~\ref{tho:noiselessprivacyworks} to report noiselessly-private outputs. In this experiment, we test the ability of an adversary for inferring if a particular household has contributed to the aggregate or not as in~\cite{buescher2017two}. We use a game, as in~\cite{bohli2010privacy,buescher2017two}, to evaluate the ability of the adversary. The setup of the game is summarized in Figure~\ref{fig:game}. At first, the adversary can select two households $i_0,i_1$. The curator select one of those households uniformly at random $i_j$. It also selects an additional $n-1$ households. Then it reports the privacy-preserving aggregate outputs. Based on the reported output, the adversary guesses the participating household $i_{\hat{j}}$. The adversary's success or advantage is then defined as 
\begin{align*}
\mathrm{Adv}:=2|\mathbb{P}\{j=\hat{j}\}-1/2|. 
\end{align*}
Small $\mathrm{Adv}$ means that the adversary is as successful as randomly guessing and large $\mathrm{Adv}$ implies that the adversary is successful in recognizing the household participating in the aggregate. 

Similar to~\cite{buescher2017two}, we use three adversary policies. The first one is based on correlation. In this case, the adversary selects $j\in\{0,1\}$ based on the correlation between $(x_{i_j,t})_{t}$ and $(y_t)_{t}$. The second policy is based on mean square error. In this case, the adversary selects $j\in\{0,1\}$ by minimizing the square error $\|(x_{i_j,t})_{t}-(y_t)_{t}\|_2$. Finally, the last policy uses the relative peaks of each load profile $(x_{i_0,t})_{t}$, $(x_{i_1,t})_{t}$, and $(y_t)_{t}$. In this case, the adversary selects $j\in\{0,1\}$ based on the most common peaks between $(x_{i_0,t})_{t}$ and $(y_t)_{t}$, or $(x_{i_1,t})_{t}$ and $(y_t)_{t}$.

\begin{figure*}
	\centering
	\begin{tikzpicture}
	\node[] at (0,0) {\includegraphics[width=.35\linewidth]{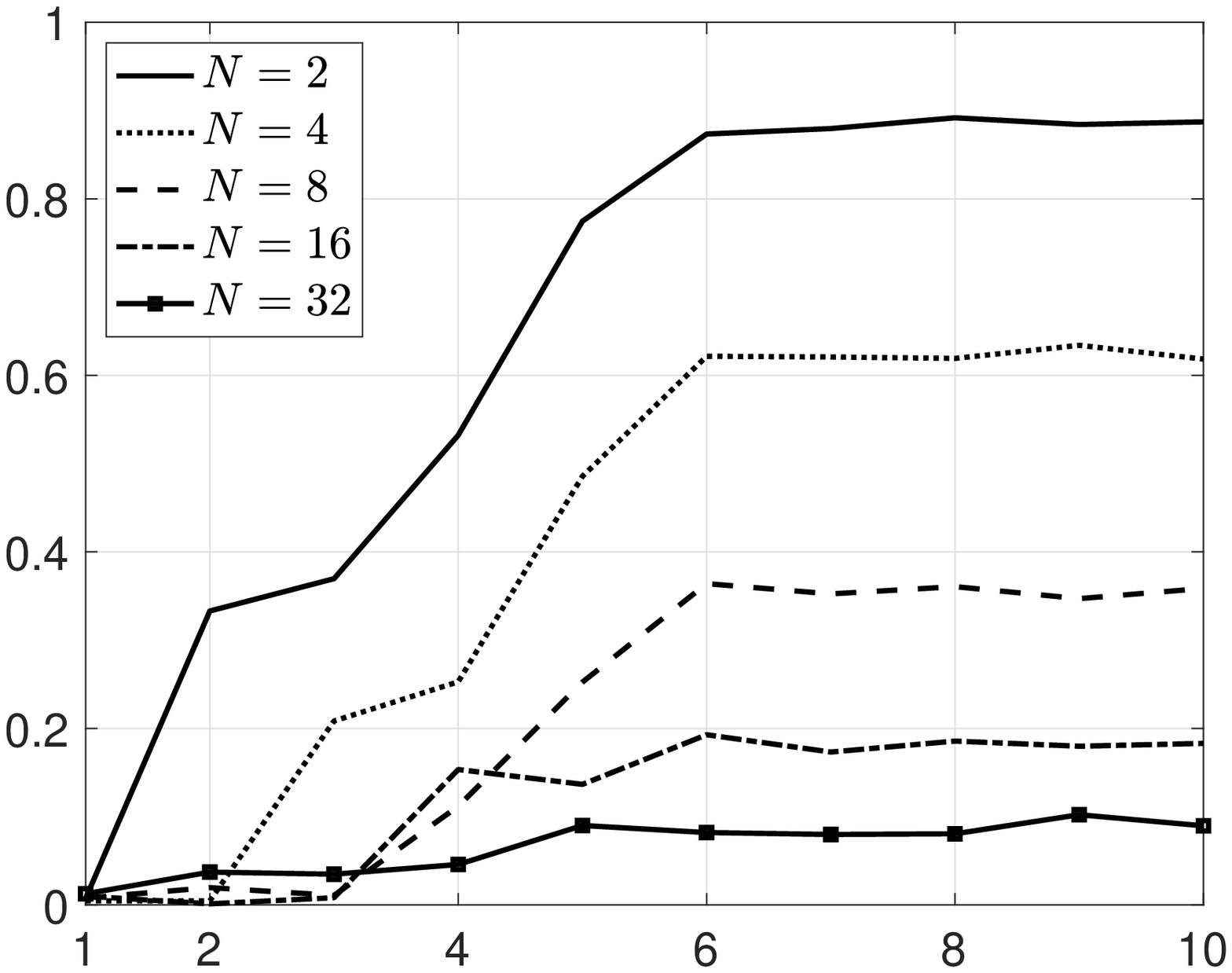}};
	\node[] at (6,0) {\includegraphics[width=.35\linewidth]{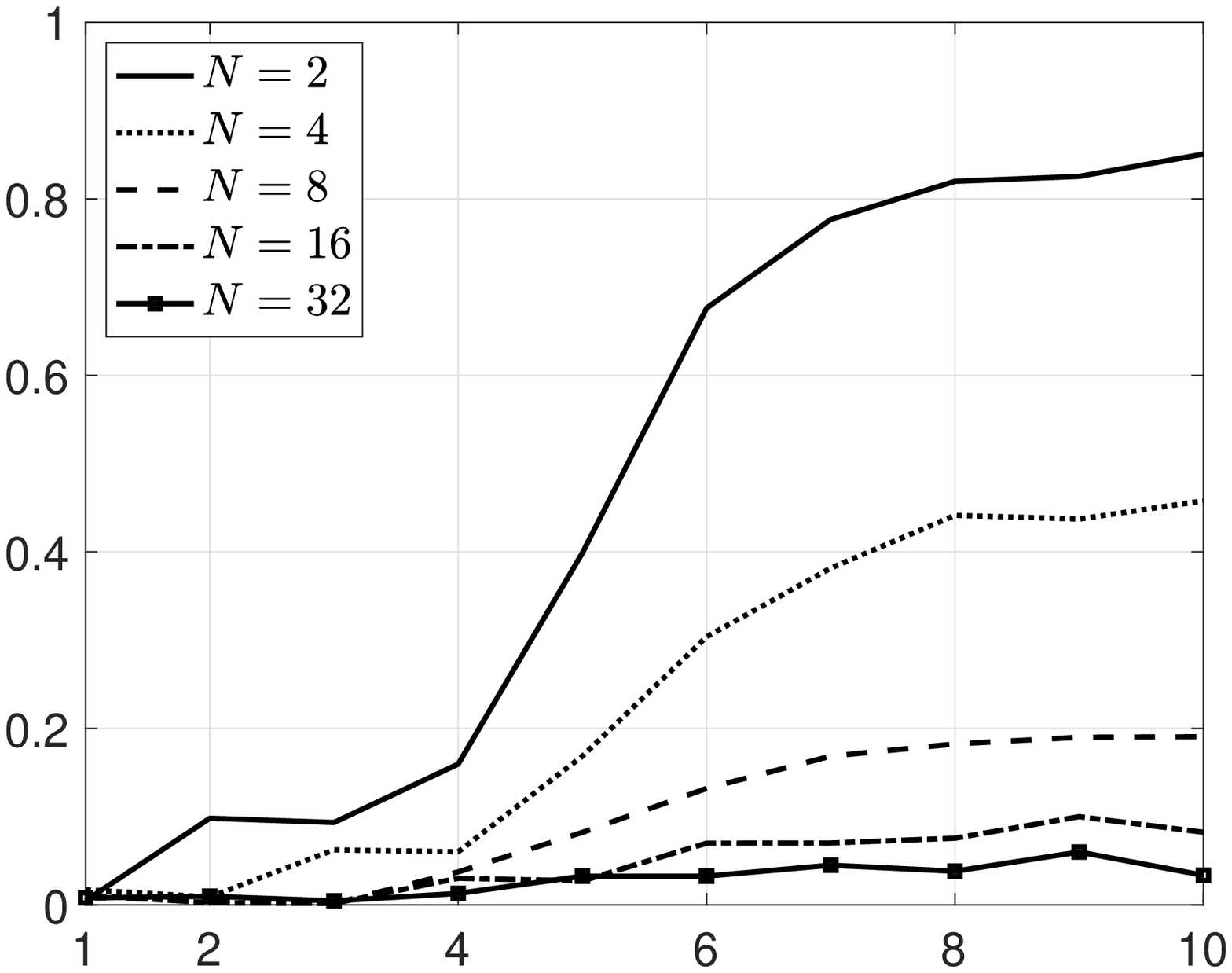}};	
	\node[] at (12,0) {\includegraphics[width=.35\linewidth]{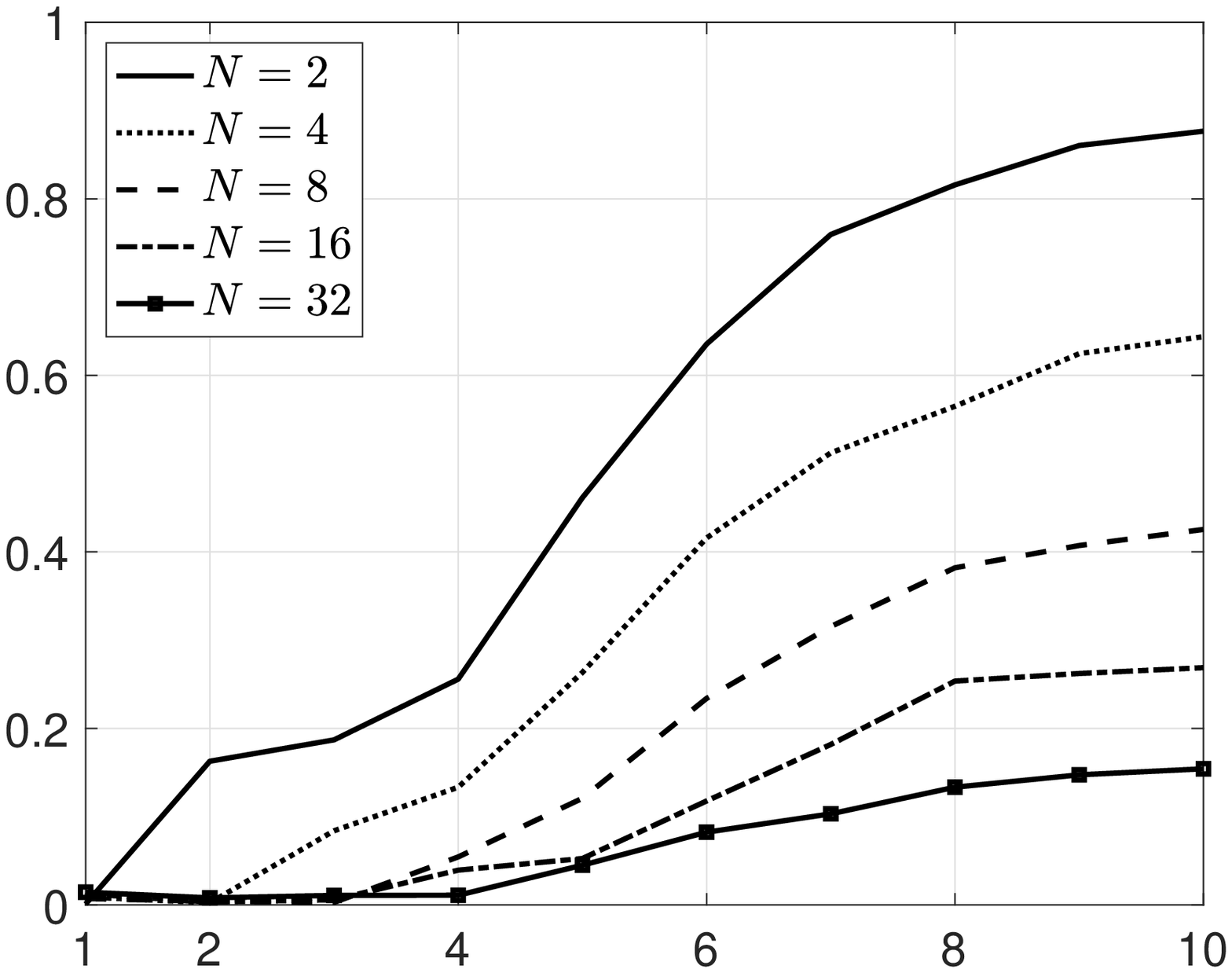}};	
	\node[rotate=90] at (-3,0) {$\mathrm{Adv}$};
	\node[] at (0,-2.3) {$\epsilon$};
	\node[rotate=90] at (+3,0) {$\mathrm{Adv}$};
	\node[] at (6,-2.3) {$\epsilon$};
	\node[rotate=90] at (+9,0) {$\mathrm{Adv}$};
	\node[] at (12,-2.3) {$\epsilon$};	
	\end{tikzpicture}
	\caption{
	\label{fig:adversary_game_aggregate}
	The advantage of the adversary $\mathrm{Adv}$ when using the correlation-based policy (left), the mean square error policy (center), and the peak-based policy (right). 
	}
\end{figure*}

Figure~\ref{fig:adversary_game_aggregate} illustrates the advantage of the adversary $\mathrm{Adv}$ when using the correlation-based policy (left), the mean square error policy (center), and the peak-based policy (right). As $\epsilon$ grows larger, the adversary's advantage tends toward the non-private case in~\cite{buescher2017two}. Clearly, even for moderate $\epsilon$ when considering small groups, the adversary's advantage  is very small. This is not the case for non-private outputs as observed in~\cite{buescher2017two}.
For instance, for small groups and moderate privacy budgets, such as $n=4$ and $\epsilon=2$ or $n=8$ and $\epsilon=3$, the adversary's advantage is negligible (almost zero). This shows that combining noiseless privacy with aggregation is an excellent tool for providing privacy to individuals.

\subsection{Energy Data: Reporting Single Consumption}
Non-intrusive load monitoring provides tools for extracting appliance-specific energy consumption statistics from the smart meter readings of a household and and is one of privacy concerns behind releasing energy data~\cite{zoha2012non,parson2012non}. In this section, we use Theorem~\ref{tho:noiselessprivacyworks} to report high-frequency energy consumption of a household in a privacy-preserving manner using local noiseless privacy. We then proceed to see the effect of privacy budget on an adversary performing non-intrusive load monitoring.

We use the low frequency data from the first house in the REDD database\footnote{\url{http://redd.csail.mit.edu/}} database~\cite{kolter2011redd}, which contains  the consumption of various appliances in the house every 3-4 seconds. This data in conjunction with the consumption of the entire house is used for training and verification of a non-intrusive load monitoring algorithm.  The consumption of the entire house is measured every second. The data is for the period of April 23--May 21, 2011. The part of the data prior to April 30th is used for training and the rest for validation purposes. We select the top 5  appliances in energy consumption for disaggregation purposes, namely, fridge, microwave, socket (in the kitchen), light, and dish washer. For non-intrusive load monitoring, we have used the NILMTK\footnote{\url{https://github.com/nilmtk/nilmtk}} toolbox in Python~\cite{NILMTK2014}. We have used a frequently utilized combinatorial optimization method for non-intrusive load monitoring. We report the success of the non-intrusive load monitoring using the $f$-score.

\begin{figure}
	\centering
	\begin{tikzpicture}
	\node[] at (0,0) {\includegraphics[width=1\linewidth]{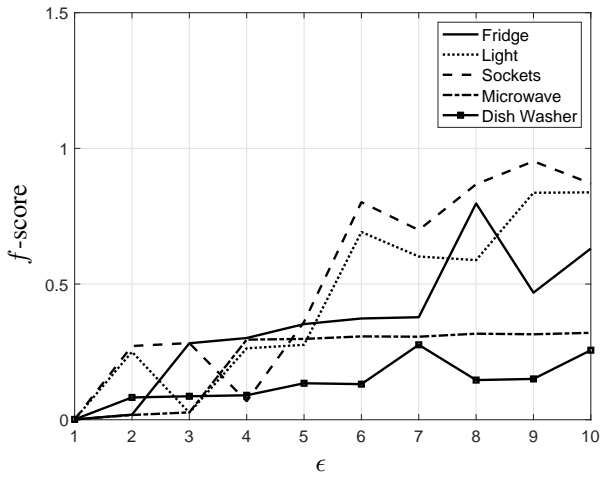}};
	\node[rotate=90] at (-4,0) {$f$-score};
	\node[] at (0,-3.2) {$\epsilon$};
	\end{tikzpicture}
	\caption{\label{fig:f-score} $f$-score of the non-intrusive load monitoring algorithm based on combinatorial optimization versus the privacy budget.}
\end{figure}

Figure~\ref{fig:f-score} shows the f-score of the non-intrusive load monitoring algorithm based on combinatorial optimization versus the privacy budget. As we can see the $f$-score gets rapidly bad as $\epsilon$ decrease. This means an adversary would not be able to identify the appliances that are used within the household reliably.  This illustrates the power of local noiseless privacy in reporting energy consumption of households for analysis while protecting the privacy of the households.

\begin{figure}
	\centering
\input{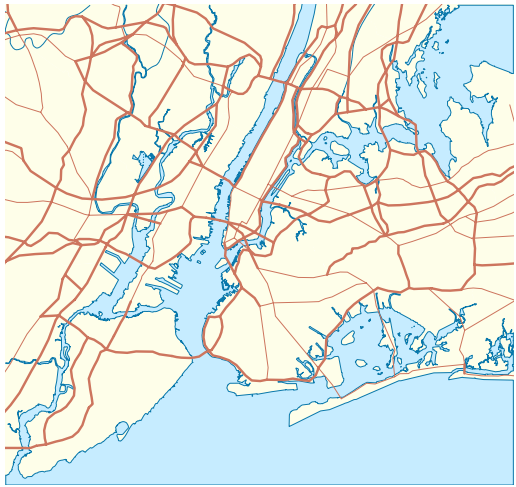}
\caption{\label{fig:NYCity} Map of New York City: latitude in [40.92$^\circ$, 40.49$^\circ$] and longitude in [-74.27$^\circ$,-73.68$^\circ$]}
\end{figure}

\begin{figure}
	\centering
\begin{tikzpicture}[>=stealth]
\node[] at (0,0) {\includegraphics[width=1\linewidth]{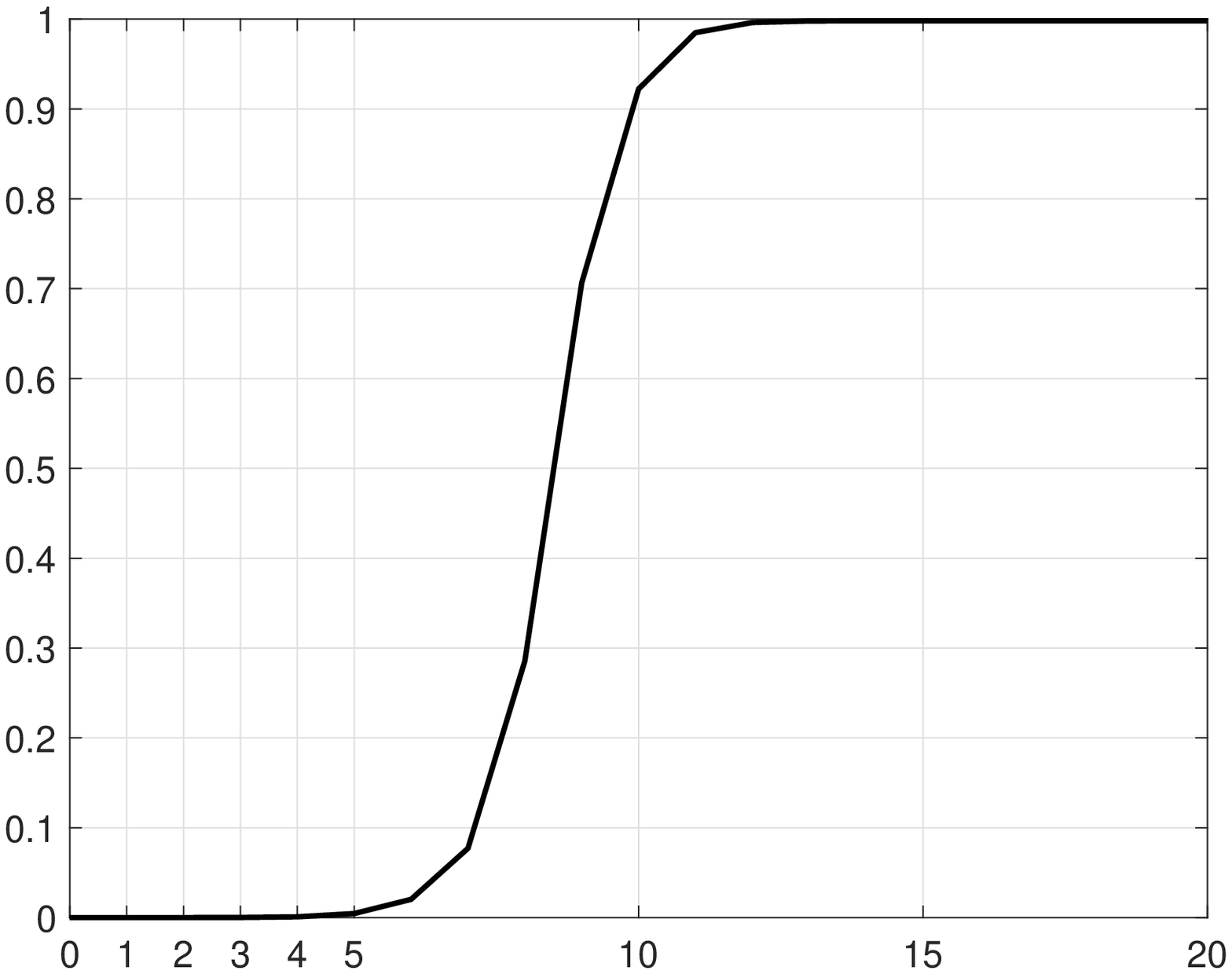}};
\node[] (1) at (-2.4,-1) {\small 
\begin{minipage}{2cm}
	\centering 
no unique \\ source-destinations
\end{minipage}
};
\draw[->] (1) -- (-3.22,-2.55);
\draw[->] (1) -- (-2.93,-2.55);
\node[] (2) at (2.5,1.8) {\small 
	\begin{minipage}{2cm}
	\centering 
	all unique \\ source-destinations
	\end{minipage}
};
\draw[->] (2) -- (+2.75,+2.79);
\node[] (3) at (-1.3,0.8) {\small 
	\begin{minipage}{2cm}
	\centering 
	19 unique \\ source-destinations
	\end{minipage}
};
\draw[->]  (3) -- (-1.3,-1.5)  --  (-2.55,-2.55);
\node[rotate=90] at (-4,0) {ratio of unique source-destinations};
\node[] at (0,-3.2) {$\epsilon$};
\end{tikzpicture}
\caption{
\label{fig:protion} Portion of taxi rides with unique start-end points versus the privacy budget.
}
\end{figure}
 

\subsection{Transport Data: Reporting Individual  Source-Destinations}
Finally, we use New York City Taxi Cab trips\footnote{\url{https://www.kaggle.com/kentonnlp/2014-new-york-city-taxi-trips}} for 2014. We use the first million trips and focus on trips that begin and end within the New York City in Figure~\ref{fig:NYCity}. Here, we consider reporting the start-end point of the taxi rides in a locally noiselessly private manner. In this subsection, we again use Theorem~\ref{tho:noiselessprivacyworks} to report start-end points of taxi rides in the New York City in a privacy-preserving manner using local noiseless privacy. In this case, for each $\epsilon$, we split the latitude and the longitude into $2^{\epsilon/2}$ boxes. Therefore, the privacy of the total privacy budget for the reported outputs is $2\epsilon$, following Theorem~\ref{tho:composition_theorem}.

Figure~\ref{fig:protion} illustrates the portion of unique start-end points of taxi rides versus the privacy budget. As we can see, the portion of unique start-end points is negligible for small $\epsilon$. This means an adversary would not be able to attribute a specific taxi ride to an individual, thus protecting the privacy of contributing individuals.

\section{Discussions and Future Work}
\label{sec:conclusions}
In this paper, we defined noiseless privacy, as a non-stochastic rival to differential privacy, requiring that the outputs of the mechanism to attain very few values while varying the data of an individual remains. We proved that noiseless-private mechanisms admit composition theorem and  post-processing does not weaken their privacy guarantees. We proved that quantization operators can ensure noiseless privacy. We finally illustrated the privacy merits of noiseless privacy and local noiseless privacy using multiple datasets in energy and transport.

\bibliographystyle{ieeetr}
\bibliography{scibib}

\begin{thebibliography}{10}

\bibitem{dwork2006calibrating}
C.~Dwork, F.~McSherry, K.~Nissim, and A.~Smith, ``Calibrating noise to
  sensitivity in private data analysis,'' in {\em Theory of Cryptography
  Conference}, pp.~265--284, Springer, 2006.

\bibitem{dwork2014algorithmic}
C.~Dwork and A.~Roth, ``The algorithmic foundations of differential privacy,''
  {\em Foundations and Trends in Theoretical Computer Science}, vol.~9,
  no.~3--4, pp.~211--407, 2014.

\bibitem{sankar2013utility}
L.~Sankar, S.~R. Rajagopalan, and H.~V. Poor, ``Utility-privacy tradeoffs in
  databases: An information-theoretic approach,'' {\em IEEE Transactions on
  Information Forensics and Security}, vol.~8, no.~6, pp.~838--852, 2013.

\bibitem{yamamoto1983source}
H.~Yamamoto, ``A source coding problem for sources with additional outputs to
  keep secret from the receiver or wiretappers,'' {\em IEEE Transactions on
  Information Theory}, vol.~29, no.~6, pp.~918--923, 1983.

\bibitem{samarati2001protecting}
P.~Samarati, ``Protecting respondents identities in microdata release,'' {\em
  IEEE transactions on Knowledge and Data Engineering}, vol.~13, no.~6,
  pp.~1010--1027, 2001.

\bibitem{sweeney2002k}
L.~Sweeney, ``k-anonymity: A model for protecting privacy,'' {\em International
  Journal of Uncertainty, Fuzziness and Knowledge-Based Systems}, vol.~10,
  no.~05, pp.~557--570, 2002.

\bibitem{1617392}
A.~Machanavajjhala, J.~Gehrke, D.~Kifer, and M.~Venkitasubramaniam,
  ``$\ell$-diversity: privacy beyond $k$-anonymity,'' in {\em 22nd
  International Conference on Data Engineering (ICDE'06)}, pp.~24--24, 2006.

\bibitem{bild2018safepub}
R.~Bild, K.~A. Kuhn, and F.~Prasser, ``{SafePub}: A truthful data anonymization
  algorithm with strong privacy guarantees,'' {\em Proceedings on Privacy
  Enhancing Technologies}, vol.~2018, no.~1, pp.~67--87, 2018.

\bibitem{Poulis2015}
G.~Poulis, A.~Gkoulalas-Divanis, G.~Loukides, S.~Skiadopoulos, and
  C.~Tryfonopoulos, {\em {SECRETA}: A Tool for Anonymizing Relational,
  Transaction and {RT}-Datasets}, pp.~83--109.
\newblock Springer International Publishing, 2015.

\bibitem{bhaskar2011noiseless}
R.~Bhaskar, A.~Bhowmick, V.~Goyal, S.~Laxman, and A.~Thakurta, ``Noiseless
  database privacy,'' in {\em International Conference on the Theory and
  Application of Cryptology and Information Security}, pp.~215--232, Springer,
  2011.

\bibitem{nabar2006towards}
S.~U. Nabar, B.~Marthi, K.~Kenthapadi, N.~Mishra, and R.~Motwani, ``Towards
  robustness in query auditing,'' in {\em Proceedings of the 32nd international
  conference on Very large data bases}, pp.~151--162, VLDB Endowment, 2006.

\bibitem{bambauer2013fool}
J.~Bambauer, K.~Muralidhar, and R.~Sarathy, ``Fool's gold: an illustrated
  critique of differential privacy,'' {\em Vanderbilt Journal of Entertainment
  \& Technology Law}, vol.~16, p.~701, 2013.

\bibitem{dankar2013practicing}
F.~K. Dankar and K.~El~Emam, ``Practicing differential privacy in health care:
  A review,'' {\em Transactions on Data Privacy}, vol.~6, no.~1, pp.~35--67,
  2013.

\bibitem{Mervis114}
J.~Mervis, ``Researchers object to census privacy measure,'' {\em Science},
  vol.~363, no.~6423, pp.~114--114, 2019.

\bibitem{farokhi2016optimal}
F.~Farokhi, J.~Milosevic, and H.~Sandberg, ``Optimal state estimation with
  measurements corrupted by {Laplace} noise,'' in {\em Decision and Control
  (CDC), 2016 IEEE 55th Conference on}, pp.~302--307, IEEE, 2016.

\bibitem{8662687}
F.~{Farokhi}, ``Development and analysis of deterministic privacy-preserving
  policies using non-stochastic information theory,'' {\em IEEE Transactions on
  Information Forensics and Security}, vol.~14, pp.~2567--2576, Oct 2019.

\bibitem{nair2013nonstochastic}
G.~N. Nair, ``A nonstochastic information theory for communication and state
  estimation,'' {\em IEEE Transactions on Automatic Control}, vol.~58, no.~6,
  pp.~1497--1510, 2013.

\bibitem{duan2014transfer}
P.~Duan, F.~Yang, S.~L. Shah, and T.~Chen, ``Transfer zero-entropy and its
  application for capturing cause and effect relationship between variables,''
  {\em IEEE Transactions on Control Systems Technology}, vol.~23, no.~3,
  pp.~855--867, 2014.

\bibitem{lim2014deterministic}
T.~J. Lim and M.~Franceschcetti, ``A deterministic view on the capacity of
  bandlimited functions,'' in {\em 2014 52nd Annual Allerton Conference on
  Communication, Control, and Computing (Allerton)}, pp.~691--698, IEEE, 2014.

\bibitem{shannon1956zero}
C.~E. Shannon, ``The zero error capacity of a noisy channel,'' {\em IRE
  Transactions on Information Theory}, vol.~2, no.~3, pp.~8--19, 1956.

\bibitem{farokhiCDC2019}
F.~Farokhi, ``Non-stochastic hypothesis testing with application to privacy
  against hypothesis-testing adversary,'' in {\em Decision and Control (CDC),
  2019 IEEE 58th Conference on}, 2019.
\newblock \url{https://arxiv.org/abs/1904.07377}.

\bibitem{narayanan2008robust}
A.~Narayanan and V.~Shmatikov, ``Robust de-anonymization of large sparse
  datasets,'' in {\em Security and Privacy, 2008. SP 2008. IEEE Symposium on},
  pp.~111--125, IEEE, 2008.

\bibitem{su2017anonymizing}
J.~Su, A.~Shukla, S.~Goel, and A.~Narayanan, ``De-anonymizing web browsing data
  with social networks,'' in {\em Proceedings of the 26th International
  Conference on World Wide Web}, pp.~1261--1269, 2017.

\bibitem{de2013unique}
Y.-A. De~Montjoye, C.~A. Hidalgo, M.~Verleysen, and V.~D. Blondel, ``Unique in
  the crowd: The privacy bounds of human mobility,'' {\em Scientific reports},
  vol.~3, p.~1376, 2013.

\bibitem{popa2011privacy}
R.~A. Popa, A.~J. Blumberg, H.~Balakrishnan, and F.~H. Li, ``Privacy and
  accountability for location-based aggregate statistics,'' in {\em Proceedings
  of the 18th ACM conference on Computer and Communications Security},
  pp.~653--666, ACM, 2011.

\bibitem{li2010secure}
F.~Li, B.~Luo, and P.~Liu, ``Secure information aggregation for smart grids
  using homomorphic encryption,'' in {\em 2010 First IEEE International
  Conference on Smart Grid Communications}, pp.~327--332, IEEE, 2010.

\bibitem{Lindell1010073540445986_3}
Y.~Lindell and B.~Pinkas, ``Privacy preserving data mining,'' in {\em Advances
  in Cryptology --- CRYPTO 2000} (M.~Bellare, ed.), pp.~36--54, Springer Berlin
  Heidelberg, 2000.

\bibitem{du2004privacy}
W.~Du, Y.~S. Han, and S.~Chen, ``Privacy-preserving multivariate statistical
  analysis: Linear regression and classification,'' in {\em Proceedings of the
  2004 SIAM International Conference on Data Mining}, pp.~222--233, SIAM, 2004.

\bibitem{vaidya2002privacy}
J.~Vaidya and C.~Clifton, ``Privacy preserving association rule mining in
  vertically partitioned data,'' in {\em Proceedings of the eighth ACM SIGKDD
  international conference on Knowledge discovery and data mining},
  pp.~639--644, ACM, 2002.

\bibitem{vaidya2008privacy}
J.~Vaidya, M.~Kantarc{\i}o{\u{g}}lu, and C.~Clifton, ``Privacy-preserving naive
  bayes classification,'' {\em The VLDB Journal}, vol.~17, no.~4, pp.~879--898,
  2008.

\bibitem{jagannathan2005privacy}
G.~Jagannathan and R.~N. Wright, ``Privacy-preserving distributed k-means
  clustering over arbitrarily partitioned data,'' in {\em Proceedings of the
  eleventh ACM SIGKDD international conference on Knowledge discovery in data
  mining}, pp.~593--599, ACM, 2005.

\bibitem{dwork2008differential}
C.~Dwork, ``Differential privacy: A survey of results,'' in {\em Theory and
  Applications of Models of Computation: 5th International Conference, TAMC
  2008, Xi'an, China, April 25-29, 2008. Proceedings} (M.~Agrawal, D.~Du,
  Z.~Duan, and A.~Li, eds.), pp.~1--19, Berlin, Heidelberg: Springer Berlin
  Heidelberg, 2008.

\bibitem{duchi2013local}
J.~C. Duchi, M.~I. Jordan, and M.~J. Wainwright, ``Local privacy and
  statistical minimax rates,'' in {\em Foundations of Computer Science (FOCS),
  2013 IEEE 54th Annual Symposium on}, pp.~429--438, IEEE, 2013.

\bibitem{kairouz2014extremal}
P.~Kairouz, S.~Oh, and P.~Viswanath, ``Extremal mechanisms for local
  differential privacy,'' in {\em Advances in Neural Information Processing
  Systems}, pp.~2879--2887, 2014.

\bibitem{machanavajjhala2008privacy}
A.~Machanavajjhala, D.~Kifer, J.~Abowd, J.~Gehrke, and L.~Vilhuber, ``Privacy:
  Theory meets practice on the map,'' in {\em Proceedings of the 2008 IEEE 24th
  International Conference on Data Engineering}, pp.~277--286, IEEE Computer
  Society, 2008.

\bibitem{hall2012random}
R.~Hall, A.~Rinaldo, and L.~Wasserman, ``Random differential privacy,'' {\em
  Journal of Privacy and Confidentiality}, vol.~4, no.~2, pp.~43--59, 2012.

\bibitem{padakandla2018preserving}
A.~Padakandla, P.~Kumar, and W.~Szpankowski, ``Preserving privacy and fidelity
  via {Ehrhart} theory,'' in {\em 2018 IEEE International Symposium on
  Information Theory (ISIT)}, pp.~696--700, IEEE, 2018.

\bibitem{6772207}
A.~D. Wyner, ``The wire-tap channel,'' {\em Bell System Technical Journal},
  vol.~54, no.~8, pp.~1355--1387, 1975.

\bibitem{courtade2012information}
T.~Courtade, ``Information masking and amplification: The source coding
  setting,'' in {\em Proceedings of the IEEE International Symposium on
  Information Theory Proceedings (ISIT)}, pp.~189--193, 2012.

\bibitem{yamamoto1988rate}
H.~Yamamoto, ``A rate-distortion problem for a communication system with a
  secondary decoder to be hindered,'' {\em IEEE Transactions on Information
  Theory}, vol.~34, no.~4, pp.~835--842, 1988.

\bibitem{Aceto2011}
M.~S. Alvim, M.~E. Andr\'{e}s, K.~Chatzikokolakis, and C.~Palamidessi, ``On the
  relation between differential privacy and quantitative information flow,'' in
  {\em Automata, Languages and Programming} (L.~Aceto, M.~Henzinger, and
  J.~Sgall, eds.), vol.~6756 of {\em Lecture Notes in Computer Science},
  pp.~60--76, Springer Berlin Heidelberg, 2011.

\bibitem{du2012privacy}
F.~du~Pin~Calmon and N.~Fawaz, ``Privacy against statistical inference,'' in
  {\em Proceedings of the 50th Annual Allerton Conference onCommunication,
  Control, and Computing (Allerton)}, pp.~1401--1408, 2012.

\bibitem{farokhi2015quadratic}
F.~Farokhi, H.~Sandberg, I.~Shames, and M.~Cantoni, ``Quadratic {Gaussian}
  privacy games,'' in {\em Proceedings of the 54th IEEE Conference on Decision
  and Control}, pp.~4505--4510, 2015.

\bibitem{wainwright2012privacy}
M.~J. Wainwright, M.~I. Jordan, and J.~C. Duchi, ``Privacy aware learning,'' in
  {\em Proceedings of Advances in Neural Information Processing Systems
  (NIPS)}, pp.~1430--1438, 2012.

\bibitem{liang2009information}
Y.~Liang, H.~V. Poor, and S.~Shamai, ``Information theoretic security,'' {\em
  Foundations and Trends in Communications and Information Theory}, vol.~5,
  no.~4--5, pp.~355--580, 2009.

\bibitem{lai2011privacy}
L.~Lai, S.-W. Ho, and H.~V. Poor, ``Privacy--security trade-offs in biometric
  security systems--{Part I}: Single use case,'' {\em IEEE Transactions on
  Information Forensics and Security}, vol.~6, no.~1, pp.~122--139, 2011.

\bibitem{li2015privacy}
Z.~Li and T.~Oechtering, ``Privacy on hypothesis testing in smart grids,'' in
  {\em IEEE Information Theory Workshop (ITW) 2015, Jeju, Korea, Oct. 11-15,
  2015}, pp.~337--341, IEEE, 2015.

\bibitem{bassi2018lossy}
G.~Bassi, M.~Skoglund, and P.~Piantanida, ``Lossy communication subject to
  statistical parameter privacy,'' in {\em 2018 IEEE International Symposium on
  Information Theory (ISIT)}, pp.~1031--1035, IEEE, 2018.

\bibitem{farokhi2018fisher}
F.~Farokhi and H.~Sandberg, ``Fisher information as a measure of privacy:
  Preserving privacy of households with smart meters using batteries,'' {\em
  IEEE Transactions on Smart Grid}, vol.~9, no.~5, pp.~4726--4734, 2018.

\bibitem{shannon1948mathematical}
C.~E. Shannon, ``A mathematical theory of communication,'' {\em Bell System
  Technical Journal}, vol.~27, no.~3, pp.~379--423, 1948.

\bibitem{hartley1928transmission}
R.~V.~L. Hartley, ``Transmission of information,'' {\em Bell System Technical
  Journal}, vol.~7, no.~3, pp.~535--563, 1928.

\bibitem{kolmogorov1959varepsilon}
A.~N. Kolmogorov and V.~M. Tikhomirov, ``$\varepsilon$-entropy and
  $\varepsilon$-capacity of sets in function spaces,'' {\em Uspekhi
  Matematicheskikh Nauk}, vol.~14, no.~2, pp.~3--86, 1959.
\newblock English translation American Mathematical Society Translations,
  series 2, vol. 17, pp. 277–364.

\bibitem{renyi1961measures}
A.~Renyi, ``On measures of entropy and information,'' in {\em Proc. of the
  Fourth Berkeley Symp. on Math. Statist. and Prob.}, vol.~1, pp.~547--561,
  1961.

\bibitem{jagerman1969varepsilon}
D.~Jagerman, ``$\varepsilon$-entropy and approximation of bandlimited
  functions,'' {\em SIAM Journal on Applied Mathematics}, vol.~17, no.~2,
  pp.~362--377, 1969.

\bibitem{nair2012nonstochastic}
G.~N. Nair, ``A nonstochastic information theory for feedback,'' in {\em
  Decision and Control (CDC), 2012 IEEE 51st Annual Conference on},
  pp.~1343--1348, IEEE, 2012.

\bibitem{duan2015transfer}
P.~Duan, F.~Yang, S.~L. Shah, and T.~Chen, ``Transfer zero-entropy and its
  application for capturing cause and effect relationship between variables,''
  {\em IEEE Transactions on Control Systems Technology}, vol.~23, no.~3,
  pp.~855--867, 2015.

\bibitem{wiese2016uncertain}
M.~Wiese, K.~H. Johansson, T.~J. Oechtering, P.~Papadimitratos, H.~Sandberg,
  and M.~Skoglund, ``Uncertain wiretap channels and secure estimation,'' in
  {\em Information Theory (ISIT), 2016 IEEE International Symposium on},
  pp.~2004--2008, IEEE, 2016.

\bibitem{shingin2012disturbance}
H.~Shingin and Y.~Ohta, ``Disturbance rejection with information constraints:
  Performance limitations of a scalar system for bounded and gaussian
  disturbances,'' {\em Automatica}, vol.~48, no.~6, pp.~1111--1116, 2012.

\bibitem{cover2012elements}
T.~M. Cover and J.~A. Thomas, {\em Elements of Information Theory}.
\newblock Wiley, 2012.

\bibitem{chatzikokolakis2013broadening}
K.~Chatzikokolakis, M.~E. Andr{\'e}s, N.~E. Bordenabe, and C.~Palamidessi,
  ``Broadening the scope of differential privacy using metrics,'' in {\em
  International Symposium on Privacy Enhancing Technologies Symposium},
  pp.~82--102, Springer, 2013.

\bibitem{barthe2011information}
G.~Barthe and B.~Kopf, ``Information-theoretic bounds for differentially
  private mechanisms,'' in {\em 2011 IEEE 24th Computer Security Foundations
  Symposium}, pp.~191--204, IEEE, 2011.

\bibitem{katz2014introduction}
J.~Katz and Y.~Lindell, {\em Introduction to Modern Cryptography, Second
  Edition}.
\newblock Chapman \& Hall/CRC Cryptography and Network Security Series, Taylor
  \& Francis, 2~ed., 2014.

\bibitem{wang2016relation}
W.~Wang, L.~Ying, and J.~Zhang, ``On the relation between identifiability,
  differential privacy, and mutual-information privacy,'' {\em IEEE
  Transactions on Information Theory}, vol.~62, no.~9, pp.~5018--5029, 2016.

\bibitem{bkakria2018linking}
A.~Bkakria, N.~Cuppens-Boulahia, and F.~Cuppens, ``Linking differential
  identifiability with differential privacy,'' in {\em International Conference
  on Information and Communications Security}, pp.~232--247, Springer, 2018.

\bibitem{issa2018operational}
I.~Issa, A.~B. Wagner, and S.~Kamath, ``An operational approach to information
  leakage,'' {\em arXiv preprint arXiv:1807.07878}, 2018.

\bibitem{dwork2010boosting}
C.~Dwork, G.~N. Rothblum, and S.~Vadhan, ``Boosting and differential privacy,''
  in {\em 2010 IEEE 51st Annual Symposium on Foundations of Computer Science},
  pp.~51--60, IEEE, 2010.

\bibitem{kairouz2017composition}
P.~Kairouz, S.~Oh, and P.~Viswanath, ``The composition theorem for differential
  privacy,'' {\em IEEE Transactions on Information Theory}, vol.~63, no.~6,
  pp.~4037--4049, 2017.

\bibitem{buescher2017two}
N.~Buescher, S.~Boukoros, S.~Bauregger, and S.~Katzenbeisser, ``Two is not
  enough: Privacy assessment of aggregation schemes in smart metering,'' {\em
  Proceedings on Privacy Enhancing Technologies}, vol.~2017, no.~4,
  pp.~198--214, 2017.

\bibitem{bohli2010privacy}
J.-M. Bohli, C.~Sorge, and O.~Ugus, ``A privacy model for smart metering,'' in
  {\em 2010 IEEE International Conference on Communications Workshops},
  pp.~1--5, IEEE, 2010.

\bibitem{zoha2012non}
A.~Zoha, A.~Gluhak, M.~A. Imran, and S.~Rajasegarar, ``Non-intrusive load
  monitoring approaches for disaggregated energy sensing: A survey,'' {\em
  Sensors}, vol.~12, no.~12, pp.~16838--16866, 2012.

\bibitem{parson2012non}
O.~Parson, S.~Ghosh, M.~Weal, and A.~Rogers, ``Non-intrusive load monitoring
  using prior models of general appliance types,'' in {\em Twenty-Sixth AAAI
  Conference on Artificial Intelligence}, 2012.

\bibitem{kolter2011redd}
J.~Z. Kolter and M.~J. Johnson, ``{REDD}: A public data set for energy
  disaggregation research,'' in {\em Workshop on Data Mining Applications in
  Sustainability (SIGKDD)}, vol.~25, pp.~59--62, 2011.

\bibitem{NILMTK2014}
N.~Batra, J.~Kelly, O.~Parson, H.~Dutta, W.~Knottenbelt, A.~Rogers, A.~Singh,
  and M.~Srivastava, ``{NILMTK}: An open source toolkit for non-intrusive load
  monitoring,'' in {\em Fifth International Conference on Future Energy Systems
  (ACM e-Energy)}, 2014.

\end{thebibliography}

\appendices

\section{Proof of Proposition~\ref{prop:1}}
\label{proof:prop:1}
First, we prove that $I_\star(X;Y)\leq L_0(X;Y)$. 
Let $I_\star(X;Y)=m$. Then,  $\range{X|Y}_\star=\{P_1,\dots,P_{2^m}\}$. Each $P_i$ is non-empty. Therefore, there exists at least one $x$ such that  $x\in P_i$. Note that $x$ must also belong to $\range{X|y}$ for any $y\in\range{Y|x}$. We prove that $\range{X|y}\subseteq P_i$.  Assume that this not the case. Therefore, there exist an element of $x'\in\range{X|y}$, distinct from $x$, that belongs to another $P_j$, $j\neq i$, because $\{P_1,\dots,P_{2^m}\}$ covers $\range{X}$. We know that $P_i$ and $P_j$ are $\range{X|Y}$-overlap isolated by the definition of partition $\range{X|Y}_\star$. On the other hand, we evidently have $x\leftrightsquigarrow x'$ (by the definition of $\range{X|Y}$-overlap connectedness). This is a contradiction and thus $\range{X|y}$ must be a subset of $P_i$. This results in $|\range{X|y}|\leq |P_i|$ and hence
\begin{align}
\min_{y\in\range{Y}}|\range{X|y}|\leq P_i, \quad \forall i\in\{1,\dots,2^m\}. \label{eqn:proof:1}
\end{align}
On the other hand, $\bigcup_{i=1}^{2^m}P_i=\range{X}$ because $\{P_1,\dots,P_{2^m}\}$ is a partition for $\range{X}$. Because of the non-overlapping nature of the sets $\{P_1,\dots,P_{2^m}\}$, we get 
\begin{align}
\sum_{i=1}^{2^m} |P_i|=|\range{X}|.\label{eqn:proof:2}
\end{align}
Combining~\eqref{eqn:proof:1} and~\eqref{eqn:proof:2} results in $2^m\min_{y\in\range{Y}}|\range{X|y}|\leq |\range{X}|$. This implies that $I_\star(X;Y)\leq L_0(X;Y)$. Similarly, we can show that $I_\star(Y;X)\leq L_0(Y;X)$. By symmetry of the maximin information~\cite{nair2013nonstochastic}, i.e., $I_\star(X;Y)=I_\star(Y;X)$, we get that $I_\star(X;Y)\leq L_0(X;Y)$ and $I_\star(X;Y)\leq L_0(Y;X)$. This concludes the proof. 

\section{Proof of Proposition~\ref{prop:zeroerrorcapacity}}
\label{proof:prop:zeroerrorcapacity}
Let $\aleph(y[k],\dots,y[1])$ denote the statement  $Y[k](\omega)=y[k],\dots,Y[1](\omega)=y[1]$. Note that
\begin{align*}
\hspace{.4in}&\hspace{-.4in}2^{L_0(X[k],\dots,X[1];Y[k],\dots,Y[1])}\\&=\max_{(y[i])_{i=1}^k\in \range{Y}^k} \frac{|\range{X[k],\dots,X[1]}|}{|\range{X[k],\dots,X[1]|\aleph(y[k],\dots,y[1])}|} \nonumber
\\&= \frac{|\range{X[k],\dots,X[1]}|}{\min_{(y[i])_{i=1}^k\in \range{Y}^k}|\range{X_k,\dots,X_1|\aleph(y[k],\dots,y[1])}|} \nonumber
\\&= \frac{\prod_{\ell=1}^k |\range{X[\ell]}|}{\prod_{\ell=1}^k\min_{y[\ell]\in \range{Y}}|\range{X[\ell]|Y[\ell](\omega)=y[\ell]}|} \nonumber
\\&=\prod_{\ell=1}^k \frac{ |\range{X[\ell]}|}{\min_{y[\ell]\in \range{Y}}|\range{X[\ell]|Y[\ell](\omega)=y[\ell]}|} \\
&=\prod_{\ell=1}^k L_0(X[\ell];Y[\ell])\\
&=(L_0(X[\ell];Y[\ell]))^k. 
\end{align*}
Therefore, 
\[L_0(X[k],\dots,X[1];Y[k],\dots,Y[1])=kL_0(X[\ell];Y[\ell]),\]
and, as a result,
\[L_0(X[k],\dots,X[1];Y[k],\dots,Y[1])/k=L_0(X[\ell];Y[\ell]).\]
Similarly, we can show that 
\[L_0(Y[k],\dots,Y[1];X[k],\dots,X[1])/k=L_0(Y[\ell];X[\ell]).\]
Combining these inequalities with Proposition~\ref{prop:1} in this paper and Theorem 4.1 in~\cite{nair2013nonstochastic} proves the result. 

\section{Proof of Proposition~\ref{tho:informationleakage_privacy}}
\label{proof:tho:informationleakage_privacy}
Note that
\begin{align*}
2^{L_0^{\mathrm{s}}(X_i;Y)}
\leq & 2^{L_0(Y;X_i)} \\
=& \sup_{x_i\in\range{X_i}} \frac{|\range{Y}|}{\range{Y|X_i(\omega)=x_i}}\\
=& |\range{Y}|\\
=& \left|\bigcup_{x'_i\in\range{X_i}}\range{Y|X_i(\omega)=x'_i,X_{-i}(\omega)=v_{-i}}\right|\\
=& \left|\range{Y|X_{-i}(\omega)=v_{-i}}\right|\\
\leq& 2^\epsilon,
\end{align*}
where the second equality follows from that the realization of $Y$ can be uniquely determined based on the realization of $X_i$, i.e., $\range{Y|X_i(\omega)=x_i}$ is a singleton. Therefore, $L_0^{\mathrm{s}}(X_i;Y)\leq L_0(Y;X_i)\leq \epsilon$. The rest follows from Proposition~\ref{prop:1}.


\section{Proof of Theorem~\ref{tho:nonstochastic_hypothesis_testing_privacy}}
\label{proof:tho:nonstochastic_hypothesis_testing_privacy}
Since $Y$ is a discrete uncertain variable, for any test $T$, Theorem~\ref{tho:bound} states that the performance of the adversary is bounded from the above by $\mathcal{P}(T) \leq \log_2(|\range{Y|X_i(\omega)=x_i}\Delta\range{Y|X_i(\omega)=x'_i}|).$ Now, note that $\range{Y|X_i(\omega)=x_i}\Delta\range{Y|X_i(\omega)=x'_i}
\subseteq \range{Y|X_{-i}(\omega)=v_{-i}}. $
Therefore, $|\range{Y|X_i(\omega)=x_i}\Delta\range{Y|X_i(\omega)=x'_i}|\leq |\range{Y|X_{-i}(\omega)=v_{-i}}|\leq 2^\epsilon$. This concludes the proof.

\section{Proof of Theorem~\ref{tho:stochastic_knowledge}}
\label{proof:tho:stochastic_knowledge}
Define function $g(\cdot)$ such that $g(X_i)=\mathfrak{M}\circ f(X)$. 
There must exists $y\in\range{Y}$ such that $\mu(g^{-1}(y)))\geq \mu(\range{X_i})2^{-\epsilon}$. As otherwise, $\mu(g^{-1}(y))< \mu(\range{X_i})2^{-\epsilon}$ for all $y\in\range{Y}$  and thus
\begin{align*}
\mu(\range{X_i})
&=
\mu\left(\bigcup_{y\in\range{Y}}g^{-1}(y)\right)\\
&=\sum_{y\in\range{Y}}\mu\left(g^{-1}(y)\right)\\
&<\sum_{y\in\range{Y}}\mu(\range{X_i})2^{-\epsilon}\\
&=\mu(\range{X_i}).
\end{align*}
This is a contradiction. Hence, we get
\begin{align*}
\mathbb{E}&\{(X_i-\hat{X}_i(X_{-i},\mathfrak{M}\circ f(X))^p|X_{-i}\}\\
&\geq \rho \int_{g^{-1}(y)} (X_i-\hat{X}_i(X_{-i},\mathfrak{M}\circ f(X))^p \mathrm{d}\mu(X_i),
\end{align*}
where $\rho=\inf_{X\in\range{X}} \xi(X)$. 
Since $g^{-1}(y)$ is a connected set, there must exists $\underline{x}_i,\overline{x}_i$ such that closure of the $g^{-1}(y)$ is equal to $[\underline{x}_i,\overline{x}_i]$. Hence, we get 
\begin{align*}
\int_{g^{-1}(y)} (X_i-&\hat{X}_i(X_{-i},\mathfrak{M}\circ f(X))^p \mathrm{d}\mu(X_i)
\\& =\int_{\underline{x}_i}^{\overline{x}_i} (X_i-\hat{X}_i(X_{-i},\mathfrak{M}\circ f(X))^p \mathrm{d}\mu(X_i)
\\&\geq \int_{\underline{x}_i}^{\overline{x}_i} (z-(\underline{x}_i+\overline{x}_i)/2)^p\mathrm{d}z\\
&\geq 2\left(\frac{\overline{x}_i-\underline{x}_i}{4}\right)^p \frac{\overline{x}_i-\underline{x}_i}{4}\\
&=\frac{\mu(g^{-1}(y))^{p+1}}{4^{p+1/2}}\\
&\geq \mu(\range{X_i})^{p+1}2^{-\epsilon(p+1)}/2^{2p+1}.
\end{align*}
This concludes the proof.
\section{Proof of Theorem~\ref{tho:composition_theorem}}
\label{proof:tho:composition_theorem}
Note that
\begin{align*}
\range{(\mathfrak{M}_1,\mathfrak{M}_2)&\circ f(X)|X_{-i}(\omega)=x_{-i}}\\
=&\range{(\mathfrak{M}_1\circ f(X),\mathfrak{M}_2\circ f(X))|X_{-i}(\omega)=x_{-i}}\\
\subseteq&\range{\mathfrak{M}_1\circ f(X)|X_{-i}(\omega)=x_{-i}}\\
&\times \range{\mathfrak{M}_2\circ f(X)|X_{-i}(\omega)=x_{-i}},
\end{align*}
and as a result 
\begin{align*}
\mu(\range{(\mathfrak{M}_1,\mathfrak{M}_2)&\circ f(X)|X_{-i}(\omega)=x_{-i}})\\
\leq &\mu(\range{\mathfrak{M}_1\circ f(X)|X_{-i}(\omega)=x_{-i}})\\
&\times \mu(\range{\mathfrak{M}_2\circ f(X)|X_{-i}(\omega)=x_{-i}}).
\end{align*}
Hence, 
\begin{align*}
\log_e(\mu(\range{(\mathfrak{M}_1,\mathfrak{M}_2)&\circ f(X)|X_{-i}(\omega)=x_{-i}}))
\\
&\leq \log_e(\mu(\range{\mathfrak{M}_1\circ f(X)|X_{-i}(\omega)=x_{-i}})\\
&\times\mu(\range{\mathfrak{M}_2\circ f(X)|X_{-i}(\omega)=x_{-i}}))\\
=&\log_e(\mu(\range{\mathfrak{M}_1\circ f(X)|X_{-i}(\omega)=x_{-i}}))\\
&+\log_e(\mu(\range{\mathfrak{M}_2\circ f(X)|X_{-i}(\omega)=x_{-i}})).
\end{align*}
The proof for discrete uncertain variables follow the same approach. 

\section{Proof of Theorem~\ref{tho:postprocessing}}
\label{proof:tho:postprocessing}
The proof follows from that $
|\range{g\circ \mathfrak{M}\circ f(X)|X_{-i}(\omega)=x_{-i}}|
\leq |\range{\mathfrak{M}\circ f(X)|X_{-i}(\omega)=x_{-i}}|$ for any $x_{-i}$. 

\section{Proof of Theorem~\ref{tho:noiselessprivacyworks}}
\label{proof:tho:noiselessprivacyworks}
For any given $x_{-i}\in\range{X_{-i}}$, due to continuity of $f$, we know that $f(\range{X_i}\times\{x_{-i}\})=f\circ \psi_{i,x_{-i}}(\range{X_{i}})\subseteq [y_{\min},y_{\max}]$ is a connected set (because $\range{X_{i}}$ is connected). Therefore, if $\mathfrak{M}$ is a $q$-level quantizer, $\range{Y|X_{-i}(\omega)=x_{-i}}=\mathfrak{M}\circ f(\range{X_i}\times\{x_{-i}\})$ can at most contain $q\mu(f(\range{X_i}\times\{x_{-i}\}))/(y_{\max}-y_{\min})$ points. Therefore, $|\range{Y|X_{-i}(\omega)=x_{-i}}|\leq q\mathcal{S}_f/(y_{\max}-y_{\min})$. This concludes the proof.

\end{document}